\newcommand{\longversion}[1]{#1}
\newcommand{\shortversion}[1]{}
\newcommand{\longshort}[2]{\longversion{#1}\shortversion{#2}}
 \newtheorem{definition}{Definition}
 \newtheorem{lemma}{Lemma} 
 \newtheorem{theorem}{Theorem}
 \newtheorem{fact}{Fact}
 \newtheorem{myrule}{Rule}
 \newcommand{\myqed}{}
 \spnewtheorem{observation}{Observation}{\bfseries}{\itshape}
 \spnewtheorem{myrule}{Rule}{\bfseries}{\itshape}
 \spnewtheorem{fact}{Fact}{\bfseries}{\itshape}
 \newcommand{\myqed}{\qed}
\newcommand{\Card}[1]{|#1|}
\newcommand{\cC}{\mathcal{C}}
\newcommand{\cO}{\mathcal{O}}
\newcommand{\NP}{\text{\normalfont NP}}
\newcommand{\FPT}{\text{\normalfont FPT}}
\newcommand{\fpt}{fixed-pa\-ra\-me\-ter trac\-ta\-ble\xspace}
\newcommand{\set}[1]{\left\{ #1 \right\}}
\newcommand{\myiff}{if and only if\xspace}
\newcommand{\ie}{i.e.\xspace}
\newcommand{\etal}{\emph{et al.}\xspace}
\newcommand{\sbn}{\mathbf{sb}_{\mathbf{N}}}
\newcommand{\tw}{{\mathbf{tw}}}
\newcommand{\var}{{\normalfont \textsf{var}}}
\newcommand{\cla}{{\normalfont \textsf{cla}}}
\newcommand{\lit}{{\normalfont \textsf{lit}}}
\newcommand{\true}{1} 
\newcommand{\false}{0} 
\newcommand{\inc}{{\normalfont \textsf{inc}}}
\newcommand{\incu}{{\normalfont \textsf{inc+u}}}
\newcommand{\univ}{{\normalfont \textsf{univ}}}
\newcommand{\Nested}{\textsc{Nested}\xspace}
\newcommand{\RHorn}{\textsc{RHorn}\xspace}
\newcommand{\BDS}{back\-door set\xspace}
\newcommand{\BDSs}{back\-door sets\xspace}
\newcommand{\NBDS}{\Nested-\BDS}
\newcommand{\NBDSs}{\Nested-\BDSs}
\newcommand{\SNBDS}{strong \NBDS}
\newcommand{\SNBDSs}{strong \NBDSs}
\newcommand{\obs}{\Nested-ob\-struc\-tion\xspace}
\newcommand{\obss}{\Nested-ob\-struc\-tions\xspace}
\newcommand{\ktw}{\mathsf{tw}(k)}
\newcommand{\kgrid}{\mathsf{grid}(k)}
\newcommand{\kobs}{\mathsf{obs}(k)}
\newcommand{\ksame}{\mathsf{same}(k)}
\title{Strong Backdoors to Nested Satisfiability \thanks{%
The authors acknowledge support from the European Research Council
(COMPLEX REASON, 239962).%
}}
\author{%
Serge Gaspers \and 
Stefan Szeider
}
 \date{%
 Institute of Information Systems\\ Vienna University of Technology\\ Vienna, Austria.\\
 \texttt{gaspers@kr.tuwien.ac.at}\\ \texttt{stefan@szeider.net}
 }
 \institute{Institute of Information Systems, Vienna University of Technology, Vienna, Austria.\\
 \texttt{gaspers@kr.tuwien.ac.at}, \texttt{stefan@szeider.net}}
\begin{document}

\maketitle

\begin{abstract}
  Knuth (1990) introduced the class of nested formulas and showed that
  their satisfiability can be decided in polynomial time.  We show that,
  parameterized by the size of a smallest strong backdoor set to the
  base class of nested formulas, checking the satisfiability of any
  CNF formula is fixed-parameter tractable.  Thus, for any $k>0$, the
  satisfiability problem can be solved in polynomial time for any
  formula~$F$ for which there exists a variable set $B$ of size at most
  $k$ such that for every truth assignment $\tau$ to $B$, the formula
  $F[\tau]$ is nested; moreover, the degree of the polynomial is
  independent of $k$. 

 Our algorithm uses the grid-minor theorem of Robertson and Seymour
 (1986) to either find that the incidence graph of the formula has
 bounded treewidth---a case that is solved using  model
 checking for monadic second order logic---or to find many
 vertex-disjoint obstructions in the incidence graph. For the latter
 case, new combinatorial arguments are used to find a small backdoor
 set. Combining both cases leads to an approximation algorithm
 producing a strong backdoor set whose size is upper bounded by a
 function of the optimum. Going through all assignments to this set of
 variables and using Knuth's algorithm, the satisfiability of the input
 formula is decided.
\end{abstract}

\section{Introduction}

In a 1990 paper \cite{Knuth90} Knuth introduced the class of nested
CNF formulas and showed that their satisfiability can be decided in
polynomial time.  A CNF formula is \emph{nested} if its variables can
be linearly ordered such that there is no pair of clauses that
\emph{straddle} each other; a clause $c$ straddles a clause $c'$ if
there are variables $x,y \in \var(c)$ and $z\in \var(c')$ such that
$x<z<y$ in the linear ordering under consideration.  \Nested denotes
the class of nested CNF formulas.  For an example see
Figure~\ref{fig:nested}.
\begin{figure}[tbh]
  \centering

\tikzset{var/.style={inner sep=.15em,circle,fill=black,draw},
         clause/.style={minimum size=1mm,rectangle,fill=white,draw},
         label distance=-1pt}
 \centering

\longversion{
  \begin{tikzpicture}
   \node (1) at (1,0) [var,label=below:$\strut t$] {};
   \node (2) at (2,0) [var,label=below:$\strut u$] {};
   \node (3) at (3,0) [var,label=below:$\strut v$] {};
   \node (4) at (4,0) [var,label=below:$\strut w$] {};
   \node (5) at (5,0) [var,label=below:$\strut x$] {};
   \node (6) at (6,0) [var,label=below:$\strut y$] {};
   \node (7) at (7,0) [var,label=below:$\strut z$] {};

   \node (s) at (1.5,0.4) [clause,label=above:$c_1$] {};
   \node (t) at (3,0.4) [clause,label=above:$c_2$] {};
   \node (u) at (4.5,0.4) [clause,label=above:$c_3$] {};
   \node (v) at (5.5,0.4) [clause,label=above:$c_4$] {};
   \node (w) at (6.5,0.4) [clause,label=above:$c_5$] {};

   \node (x) at (2,1.3) [clause,label=above:$c_6$] {};
   \node (y) at (6,1.3) [clause,label=above:$c_7$] {};

   \node (z) at (3,2.5) [clause,label=above:$c_8$] {};

   \draw (1)--(s)--(2)--(t)--(4)--(u)--(5)--(v)--(6)--(w)--(7)
   (3)--(t)
   (1) ..controls +(.05,.8) ..(x)--(2) 
   (4) .. controls +(-.3,.8).. (x)
   (5) ..controls +(.05,.8) .. (y)
   (7) ..controls +(-.05,.8) .. (y)
   
   (1) .. controls +(0,1.5)  .. (z) 

   (5) .. controls +(0,1.5) .. (z) 
   (4) .. controls +(0,.8) .. (z)
;
\end{tikzpicture}

  }
\shortversion{
\begin{tikzpicture}
   \node (1) at (1,0) [var,label=below:$\strut t$] {};
   \node (2) at (2,0) [var,label=below:$\strut u$] {};
   \node (3) at (3,0) [var,label=below:$\strut v$] {};
   \node (4) at (4,0) [var,label=below:$\strut w$] {};
   \node (5) at (5,0) [var,label=below:$\strut x$] {};
   \node (6) at (6,0) [var,label=below:$\strut y$] {};
   \node (7) at (7,0) [var,label=below:$\strut z$] {};

   \node (s) at (1.5,0.4) [clause,label=above:$c_1$] {};
   \node (t) at (3,0.4) [clause,label=above:$c_2$] {};
   \node (u) at (4.5,0.4) [clause,label=above:$c_3$] {};
   \node (v) at (5.5,0.4) [clause,label=above:$c_4$] {};
   \node (w) at (6.5,0.4) [clause,label=above:$c_5$] {};

   \node (x) at (2,1.5) [clause,label=above:$c_6$] {};
   \node (y) at (6,1.5) [clause,label=above:$c_7$] {};

   \node (z) at (3,2.5) [clause,label=above:$c_8$] {};

   \draw (1)--(s)--(2)--(t)--(4)--(u)--(5)--(v)--(6)--(w)--(7)
   (3)--(t)
   (1) ..controls +(.1,.5) ..(x)--(2) 
   (4) .. controls +(-.3,.5).. (x)
   (5) ..controls +(.1,.5) .. (y)
   (7) ..controls +(-.1,.5) .. (y)
   
   (1) .. controls +(0,1.5)  .. (z) 

   (5) .. controls +(0,1.5) .. (z) (4)--(z)
;
  
  \end{tikzpicture}

}

  \caption{Incidence graph of the nested formula $F=\bigwedge_{i=1}^8 c_i$
    with 
    $c_1=t \vee \neg u$,
    $c_2=u \vee v  \vee w$,
    $c_3=w \vee x$,
    $c_4=x \vee \neg y$,
    $c_5= y  \vee  \neg z$,
    $c_6= t \vee u \vee \neg w$,
    $c_7= \neg x \vee z$,
    $c_8= \neg t \vee w \vee x$.}
  \label{fig:nested}
\end{figure}
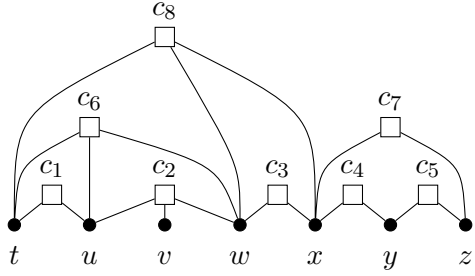
Since nested formulas have incidence graphs of bounded
treewidth~\cite{BiedlH04}, one can use treewidth-based
algorithms~\cite{FischerMakowskyRavve06,SamerSzeider10} to even
compute the number of satisfying truth assignments of nested formulas
in polynomial time (incidence graphs are defined in
Section~\ref{section:prelims}).  Hence the problems SAT and \#SAT are
polynomial for nested formulas.

The aim of this paper is to extend the nice computational properties of
nested formulas to formulas that are not nested but are of small
distance from being nested. We measure the distance of a CNF formula $F$
from being nested as the size of a smallest set $B$ of variables, such
that for all partial truth assignments $\tau$ to $B$, the reduced
formula $F[\tau]$ is nested. Such a set $B$ is called a \emph{strong
  backdoor set} with respect to the class of nested
formulas~\cite{WilliamsGomesSelman03}, or strong \Nested-backdoor set,
for short. Once we have found such a backdoor set of size $k$, we can
decide the satisfiability of $F$ by checking the satisfiability of $2^k$
nested formulas, or for model counting, we can take the sum of the
number of models of the $2^k$ nested formulas. Thus the problems SAT and
\#SAT can be solved in time $O(2^k\Card{F}^c)$ where $\Card{F}$ denotes
the length of $F$ and $k$ denotes the size of the given strong
\Nested-backdoor set; $c$ is a small constant.  In other words, the
problems SAT and \#SAT are \emph{fixed-parameter tractable} for
parameter~$k$ (for background on fixed-parameter tractability see
Section~\ref{section:prelims}). However, in order to use the backdoor
set we must find it first. Is the detection of strong \Nested-backdoor
sets fixed-parameter tractable as well?

Let $\sbn(F)$ denote the size of a smallest strong \Nested-backdoor
set of a CNF formula~$F$. To find a strong backdoor set of size
$k=\sbn(F)$ one can try all possible sets of variables of size at most
$k$, and check for each set whether it is a strong backdoor
set. However, for a formula with $n$ variables we have to check
$\binom{n}{k}=\Omega(n^k)$ such sets. Thus, this brute-force approach
scales poorly in~$k$ and does not provide fixed-parameter
tractability, as the order of the polynomial increases with~$k$.

In this paper we show that one can overcome this limitation with a
more sophisticated algorithm.  \emph{We show that the problems SAT and
  \#SAT are fixed-parameter tractable when parameterized by $\sbn$,
  the size of a smallest strong \Nested-backdoor set, even when the
  backdoor set is not provided as an input.}

Our algorithm is constructive and uses the Grid Minor Theorem of
Robertson and Seymour~\cite{RobertsonSeymour86b} to either find that
the incidence graph of the formula has bounded treewidth---a case that
is solved using model checking for monadic second order
logic~\cite{ArnborgLagergrenSeese91}---or to find many vertex-disjoint
obstructions in the incidence graph. For the latter case, new
combinatorial arguments are used to find a small strong backdoor set.
Combining both cases leads to an algorithm producing a strong backdoor
set of a given formula $F$ of size at most $2^k$ for
$k=\sbn(F)$. Solving all the $2^{2^{k}}$ resulting nested formulas
provides a solution to~$F$.

Our result provides a new parameter $\sbn$ that makes SAT and \#SAT
fixed-parameter tractable. The parameter $\sbn$ is \emph{incomparable}
with other known parameters that make SAT and \#SAT fixed-parameter
tractable.  Take for instance the treewidth of the incidence graph of
a CNF formula $F$, denoted $\tw^*(F)$.  As mentioned above, SAT and
\#SAT are fixed-parameter tractable for parameter
$\tw^*$~\cite{FischerMakowskyRavve06,SamerSzeider10}, and
$\tw^*(F)\leq 3$ holds if $\sbn(F)=0$ (i.e., if $F\in
\Nested$)~\cite{BiedlH04}. However, by allowing only $\sbn(F)=1$ we
already get formulas with arbitrarily large $\tw^*(F)$. This can be
seen as follows. Take a CNF formula $F_n$ whose incidence graph is an
$n\times n$ square grid, with vertices $v_{i,j}$, $1\leq i,j\leq n$.
Assume the clauses correspond to vertices $v_{i,j}$ where $i+j$ is
even, and call a clause even or odd according to whether for the
corresponding vertex $v_{i,j}$, the sum $i+j$ is a multiple of $4$ or
not, respectively.  It is well known that the $n\times n$ grid, $n\ge 2$, has
treewidth $n$ (folklore). 
Hence we have $\tw^*(F)=n$. Now
take a new variable $x$ and add it positively to all odd clauses and
negatively to all even clauses. Let $F_n^x$ denote the new
formula. Since the incidence graph of $F_n$ is a subgraph of the
incidence graph of $F_n^x$, we have $\tw^*(F_n^x)\geq
\tw^*(F_n)=n$. However, setting~$x$ to true removes all odd clauses
and thus yields a formula whose incidence graph is a disjoint union of
paths, which is easily seen to be nested. Similarly, setting~$x$ to
false yields a nested formula as well. Hence $\{x\}$ forms a strong
\Nested-backdoor set, and so $\sbn(F)=1$. One can also construct
formulas where $\sbn$ is large and $\tw^*$ is small, for example by taking the
variable-disjoint union $F$ of formulas $F_i = (x_i \vee y_i \vee z_i) \wedge (\neg x_i \vee y_i \vee z_i)$ with $\sbn(F_i)=1$ and
$\tw^*(F_i)=2$, $1\leq i \leq n$.  Then $\tw^*(F)=\tw^*(F_i)=2$, but
$\sbn(F)=\sum_{i=1}^n \sbn(F_i)=n$.
 
One can also define \emph{deletion backdoor sets} of a CNF formula $F$
with respect to a base class of formulas by requiring that deleting
all literals $x,\neg x$ with $x\in B$ from~$F$ produces a formula that
belongs to the base class~\cite{NishimuraRagdeSzeider07}. For many
base classes it holds that every deletion backdoor set is a strong
backdoor set, but in most cases, including the base class \Nested, the
reverse is not true. In fact, it is easy to see that if a CNF formula
$F$ has a \Nested-deletion backdoor set of size $k$, then
$\tw^*(F)\leq k+3$. In other words, the parameter ``size of a smallest
deletion \Nested-backdoor set'' is dominated by the parameter
incidence treewidth and therefore of limited interest. We note in
passing, that one can use the algorithm from \cite{MarxSchlotter12} to
show that the detection of deletion \Nested-backdoor sets is
fixed-parameter tractable.

\paragraph{Related Work.}
Williams \etal~\cite{WilliamsGomesSelman03} introduced the notion of \BDSs
to explain favorable running times and the heavy-tailed behavior of
SAT and CSP solvers on practical instances.
The parameterized complexity of finding small backdoor sets was
initiated by Nishimura \etal~\cite{NishimuraRagdeSzeider04-informal}
who showed that with respect to the classes of Horn formulas and of
2CNF formulas, the detection of strong backdoor sets is
fixed-parameter tractable. Their algorithms exploit the fact that for
these two base classes strong and deletion backdoor sets coincide.
For other base classes, 
deleting literals is a less
powerful operation than applying partial truth assignments. This is
the case for the class \Nested but also for the class \RHorn of
renamable Horn formulas. In fact, finding a deletion \RHorn-backdoor
set is fixed-parameter tractable~\cite{RazgonOSullivan09}, but it is
open whether this is the case for the detection of strong
\RHorn-backdoor sets. For clustering formulas the situation is
similar: detection of deletion backdoor sets is fixed-parameter
tractable, detection of strong backdoor sets is most probably not
\cite{NishimuraRagdeSzeider07}. Very recently, the authors of the
present paper showed that for the base class of formulas whose
incidence graph is acyclic there is a fixed-parameter approximation
algorithm for strong backdoor sets. That is, the following problem is
fixed-parameter tractable: find a strong backdoor set of size at most
$k$ or decide that there is no strong backdoor set of size at most
$2^k$~\cite{GaspersSzeider11a}. The present paper extends the ideas
from \cite{GaspersSzeider11a} to the significantly more involved case
with \Nested as the base class.

We conclude this section by referring to a recent survey on the
parameterized complexity of backdoor sets~\cite{GaspersSzeider11festschrift}.

\section{Preliminaries}
\label{section:prelims}

\paragraph{Parameterized Complexity.}
Parameterized Complexity \cite{DowneyFellows99,FlumGrohe06,Niedermeier06} is a two-di\-men\-sio\-nal framework to classify the complexity of problems based on their
input size $n$ and some additional parameter $k$.
It distinguishes between running times of the form $f(k) n^{g(k)}$ where the degree
of the polynomial depends on $k$ and running times of the form $f(k) n^{O(1)}$ where the exponential part of the running time is independent of $n$.

A parameterized problem is
\emph{fixed-parameter tractable} (\FPT) if there exists an algorithm that
solves an input of size $n$ and parameter $k$ in time bounded by $f(k)
n^{O(1)}$.  In this case we say that the \emph{parameter
  dependence} of the algorithm is $f$ and we call it an 
\emph{\FPT\ algorithm}.

Parameterized Complexity has a hardness theory, similar to the theory
of \NP-completeness to show that certain problems have no \FPT\
algorithm under complexity-theoretic assumptions.


\paragraph{Graphs.}
Let $G=(V,E)$ be a simple, finite graph.
Let $S \subseteq V$\longversion{ be a subset of its vertices} and $v\in V$\longversion{ be a vertex}.
We denote by $G - S$ the graph obtained from $G$ by removing all vertices in $S$ and all edges incident to vertices in $S$.
We denote by $G[S]$ the graph $G - (V\setminus S)$.
The \emph{(open) neighborhood} of $v$ is $N(v) = \set{u\in V : uv\in E}$, the \emph{(open) neighborhood} of $S$ is $N(S) = \bigcup_{u\in S}N(u)\setminus S$, and their \emph{closed
neighborhoods} are $N[v] = N(v)\cup \set{v}$ and $N[S] = N(S)\cup S$, respectively.
A $v_1$--$v_k$ \emph{path} $P$ of length $k$ in $G$ is a sequence of $k$ pairwise distinct vertices $(v_1, v_2, \cdots, v_k)$ such that $v_i v_{i+1}\in E$ for each $i\in \{1, \dots, k-1\}$.
The vertices $v_1$ and $v_k$ are the \emph{endpoints} of $P$ and all other vertices from~$P$ are \emph{internal}. An edge is \emph{internal} to
$P$ if it is incident to two internal vertices from $P$.
Two or more paths are independent if none of them contains an inner
vertex of another. 
%
\longversion{

A \emph{tree decomposition} of $G$ is a pair 
$(\{X_i : i\in I\},T)$
where $X_i \subseteq V$, $i\in I$, and $T$ is a tree with elements
of $I$ as nodes
such that:
\begin{enumerate}
  \item $\bigcup_{i\in I} X_i = V$;
  \item $\forall uv\in E$, $\exists i \in I$ such that $\{u,v\} 
\subseteq X_i$;
  \item $\forall i,j,k \in I$, if $j$ is on the path from $i$ to $k$ in $T$ 
then $X_i \cap X_k \subseteq X_j$.
\end{enumerate}
The \emph{width} of a tree decomposition is $\max_{i \in I} |X_i|-1$. }%
The \emph{treewidth} \cite{RobertsonSeymour86} of $G$
\longversion{is the minimum width taken over all tree decompositions
of $G$ and it }is denoted by $\tw(G)$.
A graph is \emph{planar} if it can be drawn in the plane with no crossing edges.
For other standard graph-theoretic notions not defined here, we refer to \cite{Diestel00}.

\paragraph{CNF Formulas and Satisfiability.}
We consider propositional formulas in conjunctive normal form (CNF) where no clause contains
a complementary pair of literals.
For a clause $c$, we write $\lit(c)$ and $\var(c)$ for the sets of literals and variables
occurring in $c$, respectively.
For a CNF formula $F$ we write $\cla(F)$ for its set of clauses,
$\lit(F) = \bigcup_{c\in \cla(F)} \lit(c)$ for its set of literals, and
$\var(F) = \bigcup_{c\in \cla(F)} \var(c)$ for its set of variables.

For a set $X\subseteq \var(F)$ we denote by $2^X$ the set of
all mappings $\tau:X\rightarrow \set{0,1}$, the \emph{truth assignments} on $X$.
A truth assignment on $X$
can be extended to 
the literals over $X$ 
by setting $\tau(\neg x) = 1-\tau(x)$ for all $x\in X$.
Given a CNF formula $F$ and a truth assignment $\tau \in 2^X$ we define
$F[\tau]$ to be the formula obtained from $F$ by removing all clauses $c$
such that $\tau$ sets a literal of $c$ to~1, and removing the literals set to~0
from all remaining clauses.

A CNF formula $F$ is \emph{satisfiable} if there is some $\tau\in
2^{\var(F)}$ with $F[\tau]=\emptyset$. 
SAT is the $\NP$-complete problem of deciding whether a given CNF formula is
satisfiable~\cite{Cook71,Levin73}. \#SAT is the \#P-complete problem of
determining the number of distinct $\tau\in 2^{\var(F)}$ with $F[\tau]=\emptyset$ \cite{Valiant79b}.

\paragraph{Nested Formulas.}
Consider a linear order $<$ of the variables of a CNF formula $F$.
A clause $c$ \emph{straddles} a clause $c'$ if there are variables $x,y \in \var(c)$ and $z\in \var(c')$
such that $x<z<y$. Two clauses \emph{overlap} if they straddle each other.
A CNF formula $F$ is \emph{nested} if there exists a linear ordering $<$ of $\var(F)$ in which no two clauses of $F$ overlap
\cite{Knuth90}.
The satisfiability of a nested CNF formula can be determined in polynomial time \cite{Knuth90}.

The \emph{incidence graph} of a CNF formula $F$ is the bipartite graph $\inc(F)=(V,E)$ with
$V = \var(F) \cup \cla(F)$ and for a variable $x \in \var(F)$ and a clause $c \in \cla(F)$
we have $x c \in E$ if $x\in \var(c)$. The \emph{sign} of the edge $x c$ is \emph{positive}
if $x\in \lit(c)$ and \emph{negative} if $\neg x \in \lit(c)$.

The graph $\incu(F)$ is $\inc(\univ(F))$, where $\univ(F)$ is obtained from $F$ by adding a \emph{universal} clause $c^*$ containing all variables of $F$.
By a result of Kratochv\'{\i}l and K\v{r}iv\'{a}nek \cite{KratochK93}, $F$ is nested \myiff $\incu(F)$ is planar.
Since $\inc(F)$ has treewidth at most $3$ if $F$ is nested \cite{BiedlH04}, the number of satisfying assignments of
$F$ can also be counted in polynomial time \cite{FischerMakowskyRavve06,SamerSzeider10}.

\paragraph{Backdoors.}
Backdoor sets are defined with respect to a fixed class $\cC$ of CNF
formulas, the \emph{base class}.
Let $B$ be a set of propositional variables and $F$ be a CNF formula.
$B$ is a \emph{strong} \emph{$\cC$-\BDS{}} of $F$ if $F[\tau]\in \cC$ for each $\tau \in 2^B$.
$B$ is a \emph{deletion $\cC$-\BDS{}} of $F$ if $F - B \in \cC$, where $F - B = \set{C \setminus \set{x, \neg x : x\in B} : C \in F}$.

If we are given a strong $\cC$-\BDS of $F$ of
size $k$, we can reduce the satisfiability of $F$ to the satisfiability
of $2^k$ formulas in $\cC$.
Thus SAT becomes \FPT\ in $k$.
If $\cC$ is clause-induced (\ie, $F\in \cC$ implies $F'\in \cC$ for every $F'\subseteq F$),
any deletion $\cC$-\BDS of $F$
is a strong $\cC$-\BDS of $F$.
The interest in deletion \BDSs is motivated for base classes where they
are easier to detect than strong \BDSs.
The challenging problem is to find a strong 
or deletion
$\cC$-\BDS of size at most $k$ if it exists.
Denote by $\sbn(F)$ the size of a smallest \SNBDS.


\paragraph{Minors and Grids.}
%
%
%
%
The \emph{$r$-grid} is the graph
$L_r=(V,E)$ with vertex set $V = \{(i, j) : 1 \le i \le r$, $1 \le j \le r\}$ in which two vertices
$(i,j)$ and $(i',j')$ are adjacent \myiff $|i-i'|+|j-j'|=1$.
We say that a vertex $(i,j)\in V$ has horizontal index $i$ and vertical index $j$.

A graph $H$ is a \emph{minor} of a graph $G$ if $H$ can be obtained from a subgraph of $G$
by contracting edges. The \emph{contraction} of an edge $uv$
makes $u$ adjacent to all vertices in $N(v)\setminus \{u\}$ and removes $v$.

If $H$ is a minor of $G$, then one can find a model of $H$ in $G$.
A \emph{model} of $H$ in $G$ is a set of vertex-disjoint connected subgraphs
of $G$, one subgraph $C_u$ for each vertex $u$ of $H$, such that if $uv$ is an edge of $H$, then
there is an edge of $G$ with one endpoint in $C_u$ and the other in $C_v$.

By Wagner's theorem \cite{Wagner37}, a graph is planar \myiff it has no
$K_{3,3}$ and no $K_5$ as a minor. Here, $K_5$ denotes the complete graph
on $5$ vertices and $K_{3,3}$ the complete bipartite graph with $3$ vertices in
both independent sets of the bipartition.

We will use  Robertson and Seymour's grid-minor theorem.

\begin{theorem}[\cite{RobertsonSeymour86b}]
 For every positive integer $r$, there exists a constant $f(r)$ such that if a graph~$G$
 has treewidth at least $f(r)$, then $G$ contains an $r$-grid as a minor.
\end{theorem}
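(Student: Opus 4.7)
The plan is to proceed via the treewidth-bramble duality theorem of Seymour and Thomas: a graph has treewidth at least $k$ if and only if it admits a bramble of order at least $k+1$, where a bramble is a family $\mathcal{B}$ of pairwise ``touching'' connected subgraphs (any two either share a vertex or are joined by an edge), and the order of $\mathcal{B}$ is the minimum size of a vertex set hitting every member. Thus from treewidth at least $f(r)$ one extracts a bramble of large order, which via Menger's theorem guarantees many internally disjoint paths between any two large ``sides'' of the bramble, giving the well-linkedness we need to extract a grid minor.

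Using this linkage, I would construct the $r$-grid in two layers. First, build a nested sequence of concentric cycles $C_0 \supset C_1 \supset \cdots \supset C_{r-1}$: start with a long cycle whose vertex set meets many bramble elements, then inductively find a shorter cycle inside the region bounded by $C_i$, using the bramble to guarantee that the ``interior'' still contains a sufficiently well-linked substructure. Second, find $r$ internally disjoint ``radial'' paths joining the outer cycle $C_0$ to the innermost cycle $C_{r-1}$, which exist again by Menger applied to the linkage property of the bramble. The union of the concentric cycles and the radial paths, after contracting each arc of a cycle between two consecutive radial crossings into a single vertex, provides a minor model of $L_r$.

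The hard part will be controlling the interaction between the concentric cycles and the radial linkages: naively chosen paths may cross the cycles in an erratic cyclic pattern that does not yield a grid. One has to argue, either by rerouting along bramble elements or by invoking a \emph{tangle} orientation of small separations, that after passing to a submodel the radial paths cross each cycle in the same cyclic order. This rerouting step is the technically heavy core of the Robertson--Seymour proof, and it is also why the resulting function $f(r)$ is astronomical in their original argument; subsequent work of Chekuri--Chuzhoy sharpens $f(r)$ to a polynomial by routing the grid through an expander-like substructure rather than through literal concentric cycles, but the basic bramble-to-grid strategy above is the one I would try first.
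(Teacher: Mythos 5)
This statement is not proved in the paper at all: it is the Excluded Grid Theorem of Robertson and Seymour, imported as a black box with a citation, and the only quantitative information the paper adds is the bound $f(r)\le 20^{2r^5}$ taken from Robertson, Seymour and Thomas. So there is no ``paper proof'' to match your argument against; the relevant question is whether your proposal would stand on its own as a proof, and as written it does not. What you give is an accurate roadmap of the known proof architecture (duality between treewidth and brambles/tangles, extraction of a well-linked structure, concentric cycles plus a radial linkage, then untangling into a grid), but the two places where essentially all of the work lies are left as declarations of intent. First, producing the nested family of concentric cycles $C_0\supset\cdots\supset C_{r-1}$ from a bramble of large order is itself a substantial argument, not a routine consequence of Menger's theorem; you assert that the interior of each $C_i$ ``still contains a sufficiently well-linked substructure'' without saying how the order of the bramble is preserved when you pass inside a cycle. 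Second, and more importantly, you explicitly defer the rerouting step that forces the radial paths to cross each cycle in a consistent cyclic order; without it, the union of cycles and paths need not contain any large grid minor, and controlling this interaction is precisely what makes the theorem hard and the original bound astronomical.

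In short: your sketch correctly identifies the strategy used in the literature (and correctly points to Chekuri--Chuzhoy for polynomial bounds), but it is a summary of how others prove the theorem rather than a proof, with an acknowledged gap at its technical core. For the purposes of this paper that is unproblematic---the authors themselves treat the theorem as a cited external result---but you should not present the proposal as a verification of the statement; either cite the theorem as the paper does, or commit to carrying out the concentric-cycle construction and the reordering/untangling argument in full.
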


\noindent
By \cite{RobertsonSeymourThomas94}, $f(r) \le 20^{2 r^5}$.
A linear FPT algorithm (parameterized by $k$) by Bodlaender
\cite{Bodlaender96} finds a tree decomposition of width at most $k$ of a graph $G$ if $\tw(G)\le k$.
A quadratic FPT algorithm (parameterized by $r$) by Kawarabayashi \etal~\cite{KawarabayashiKR12} finds an $r$-grid minor in a graph $G$ if $G$ contains an $r$-grid as a minor.%
\shortversion{

Proofs of statements marked with $(\star)$ can be found in the appendix.
}

\section{Detection of Strong Nested-Backdoor Sets}

Our overall approach to find \SNBDSs resembles the approach from \cite{GaspersSzeider11a}
to find strong \textsc{Forest}-\BDSs. Looking more closely at both algorithms,
the reader will see significant differences in how the two main cases are handled.

Let $F$ be a CNF formula and $k$ be an integer.
Our FPT algorithm will decide the satisfiability of $F$ if $F$ has a \SNBDS of size at most $k$.

The first step of the algorithm is to find a good approximation for a smallest \SNBDS.
Specifically, it will either determine that $F$ has no \SNBDS of size at most~$k$, or
it will compute a \SNBDS of size at most $2^k$. In case 
$F$ has no \SNBDS of size at most $k$, the algorithm stops, and if
it finds a \SNBDS $B$ of size at most $2^k$, it uses Knuth's algorithm \cite{Knuth90}
to check for every assignment
$\tau \in 2^B$ whether $F[\tau]$ is satisfiable and answers \textsc{Yes} if at least one
such assignment reduced $F$ to a satisfiable formula and \textsc{No} otherwise.
Since $\tw^*(F[\tau])\le 3$ \cite{BiedlH04} for every truth assignment $\tau$ to $B$,
a tree decomposition of $\inc(F[\tau])$ can be computed in linear time \cite{Bodlaender96}, and
treewidth-based dynamic programming algorithms can be used to compute the number of satisfying assignments
of $F[\tau]$ in polynomial time \cite{FischerMakowskyRavve06,SamerSzeider10}.

We will arrive at our main theorem.

\begin{theorem}\label{thm:sat}
The problems SAT and  \#SAT are \fpt parameterized by $\sbn(F)$.
\end{theorem}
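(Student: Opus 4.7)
The plan is to prove the theorem by constructing an FPT approximation algorithm for \SNBDSs, then enumerating assignments to the resulting backdoor and solving each resulting nested formula with Knuth's algorithm. Given $F$ together with a target $k = \sbn(F)$, I would first use Bodlaender's algorithm to decide whether $\tw(\inc(F)) \le f(r)$ for an appropriately chosen $r = r(k)$. If so, a tree decomposition of width $f(r)$ is obtained in linear FPT time, and both SAT and \#SAT can be settled either via the Courcelle-type model-checking result of \cite{ArnborgLagergrenSeese91} or directly by the treewidth-based dynamic programming of \cite{FischerMakowskyRavve06,SamerSzeider10}, giving running time $g(k)\cdot |F|^{O(1)}$ on this branch.

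The interesting case is $\tw(\inc(F)) > f(r)$: by the Grid-Minor Theorem and the algorithm of Kawarabayashi et al.\ we extract in FPT time an $r$-grid minor of $\inc(F)$. The key structural step is to harvest from this grid minor a large family of pairwise vertex-disjoint \obss, i.e., small subconfigurations of $\inc(F)$ that certify non-nestedness. A natural way to formalize this uses Kratochv\'{\i}l and K\v{r}iv\'{a}nek's characterization that $F \in \Nested$ \myiff $\incu(F)$ is planar, so each obstruction should carry a $K_{3,3}$ or $K_5$ minor in $\incu(F[\tau])$ that persists for any assignment $\tau$ avoiding its few ``relevant'' variables. Each obstruction $O$ then yields a small hitting set $B_O$ of variables, at least one of which must lie in every \SNBDS of $F$.

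The main obstacle, and the bulk of the combinatorial work, is to quantify both the yield and the size of hitting sets: to show that from an $r$-grid one can extract enough vertex-disjoint obstructions to obtain a lower bound on $\sbn(F)$ that grows with $r$, while keeping the hitting sets $B_O$ small. Choosing $r$ so that this lower bound exceeds $k$, the high-treewidth branch either correctly certifies that no \SNBDS of size $\le k$ exists, or, more usefully, feeds a branching procedure that repeatedly picks a disjoint obstruction, branches over its hitting set, and commits chosen variables to the growing backdoor. A careful accounting of the branching width against the recursion depth, governed by $k$, is what will yield the claimed $2^k$ upper bound on the size of the computed \SNBDS; I expect this step to require the most delicate new arguments, specifically a structural analysis of how variables outside the obstructions can interact with, but not defuse, the non-planarity witness inside $\incu$.

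Finally, with a \SNBDS $B$ of size at most $2^k$ in hand, we enumerate all $\tau \in 2^B$; each reduct $F[\tau]$ is nested, so Knuth's algorithm answers SAT, and the bound $\tw(\inc(F[\tau])) \le 3$ from \cite{BiedlH04} together with \cite{FischerMakowskyRavve06,SamerSzeider10} answers \#SAT, both in polynomial time per assignment. The existential disjunction (for SAT) or the sum (for \#SAT) of the $2^{|B|} \le 2^{2^k}$ answers gives the final output in total time $2^{2^k}\cdot |F|^{O(1)}$, proving Theorem~\ref{thm:sat}.
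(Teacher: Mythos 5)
Your overall architecture is the same as the paper's: a treewidth dichotomy driven by the Grid-Minor Theorem, with the low-treewidth branch handled by MSO model checking or treewidth-based dynamic programming (for Theorem~\ref{thm:sat} it is indeed enough to solve SAT/\#SAT directly there, as the paper itself remarks), the high-treewidth branch harvesting vertex-disjoint \obss from the grid, a branching procedure yielding a \SNBDS of size at most $2^k$, and a final enumeration of the $2^{2^k}$ assignments solved via Knuth's algorithm and the $\tw^*\le 3$ counting algorithms. However, the core of the argument is missing, and the specific route you propose for it would fail. You assert that each obstruction $O$ ``yields a small hitting set $B_O$ of variables, at least one of which must lie in every \SNBDS.'' This is not true: a backdoor variable need not occur in $O$ at all. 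It can kill $O$ \emph{externally} by occurring positively in one clause of $O$ and negatively in another, so that every assignment to it satisfies (and hence removes) some clause of $O$; the set of such external killers of a single \obs is not bounded by any function of $k$, so there is no small per-obstruction hitting set to branch over, and a branching procedure built on $B_O$ has unbounded width.

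This is precisely the difficulty the paper's Section on the large-grid case is designed to overcome, and it is the part you explicitly defer (``I expect this step to require the most delicate new arguments''). The paper takes $\kobs = 2^k\cdot\ksame + k$ disjoint obstructions, guesses the at most $k$ of them killed internally, guesses a subfamily $\cO_s$ of $\ksame$ obstructions that are killed externally by the \emph{same} set of backdoor variables, and works with the set $Z$ of their common external killers. Only then, via the auxiliary graphs $G_O$, the multiplicity argument (Rule~\ref{rule:multi}), and the combination of the Kirousis--Serna--Spirakis theorem on highly connected subgraphs, Menger's theorem, and Lemma~\ref{lem:paths} turning edge-disjoint into independent paths (Rule~\ref{rule:nomulti}), does one obtain a set $S$ of bounded size that must intersect every valid \SNBDS of size at most $k$; the union $S^*$ over all guesses then supports the recursion on $F[x=0]$ and $F[x=1]$ that gives the $2^k$ bound. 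Without an argument of this kind (or some substitute for it), the high-treewidth branch of your proof does not go through, so the proposal as written has a genuine gap at its central step.
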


It only remains to find a \SNBDS with an \FPT\ algorithm.  In the
remainder of this section we present an \FPT\ algorithm that either
determines that $F$ has no \SNBDS of size at most $k$, or computes one
of size at most $2^k$. An algorithm of that kind is called an
\emph{\FPT-approximation algorithm}~\cite{Marx08b}, as it is an \FPT\
algorithm that computes a solution that approximates the optimum with an
error bounded by a function of the parameter.



Consider the incidence graph $G=(V,E)=\inc(F)$ of $F$. By \cite{RobertsonSeymourThomas94},
it either has treewidth at most $\ktw$, or it has a $\kgrid$-grid as a minor.
Here,
\begin{align*}
 \ktw &:= 20^{2 \kgrid^5},\\
 \kgrid &:= 4 \cdot \sqrt{\kobs+1},\\
 \kobs &:= 2^{k} \cdot \ksame + k,\text{ and}\\
 \ksame &:= 15 \cdot 2^{2k+2}.
\end{align*}

\subsection{Large Grid Minor}

\tikzset{var/.style={inner sep=.15em,circle,fill=black,draw},
         clause/.style={minimum size=1mm,rectangle,fill=white,draw},
         label distance=-2pt}

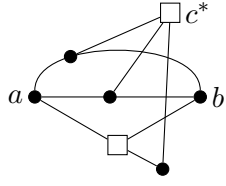
\begin{figure}[tb]
 \centering
  \begin{tikzpicture}[xscale=1,yscale=0.8]
   \node (a) at (0,0) [var,label=left:$a$] {};
   \node (b) at (2.2,0) [var,label=right:$b$] {};
   
   \draw (a)--(b) node (d) [pos=0.45,var] {};
   \draw (a) .. controls +(0,1) and +(0,1) .. (b) node (e) [pos=0.3,var] {};
   \node (f1) at (1.1,-0.8) [clause] {};
   \draw (a)--(f1)--(b);
   \node (f) at (1.7,-1.2) [var] {};
   \draw (f1)--(f);
   
   \node (c) at (1.8,1.4) [clause,label=right:$c^*$] {};
   \draw (d)--(c)--(e) (c)--(f);
  \end{tikzpicture}
  \caption{A \obs leading to a $K_{3,3}$-minor with the universal clause $c^*$.}
  \label{fig:obs}
\end{figure}

The goal of this subsection is to design an \FPT\ algorithm that, given 
a $\kgrid$-grid as a minor in~$G$, computes a set $S^*$ of $2^{O(k^{10})}$ variables
from $\var(F)$ such that every \SNBDS of size at most $k$ contains a variable from $S^*$.
Suppose $G$ has a $\kgrid$-grid as a minor.

\begin{definition}
 An $a$--$b$ \emph{\obs\/} is a subgraph of $\inc(F)$ consisting of
\begin{itemize}
 \item five distinct vertices $a,b,p_1,p_2,p_3$, such that $p_1,p_2,p_3$ are variables,
 \item three independent $a$--$b$ paths $P_1,P_2,P_3$, and
 \item an edge between $p_i$ and a vertex from $P_i$ for each $i\in \{1,2,3\}$.
\end{itemize}
\end{definition}
In particular, if a path $P_i$ has a variable $v$ as an interior vertex, we can take $p_i:=v$.
See Figure~\ref{fig:obs}.

\begin{lemma}\label{lem:obs}
 If $F'$ is a CNF formula such that $\inc(F')$ contains a \obs, then $F' \notin \Nested$.
\end{lemma}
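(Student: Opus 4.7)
The strategy would be to apply the theorem of Kratochv\'{\i}l and K\v{r}iv\'{a}nek: $F'$ is nested if and only if $\incu(F')$ is planar. It therefore suffices to show that whenever $\inc(F')$ contains a \obs, $\incu(F')$ is non-planar; by Wagner's theorem, it is enough to exhibit a $K_{3,3}$-minor in $\incu(F')$.

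Fix an $a$-$b$ \obs consisting of three independent paths $P_1,P_2,P_3$ and pendants $p_1,p_2,p_3$, where $p_i$ is attached by an edge to some $v_i\in V(P_i)$. My intended $K_{3,3}$-minor (illustrated in Figure~\ref{fig:obs}) has six branch sets: the singletons $\{a\},\{b\},\{c^*\}$ on one side, and on the other side, for each $i\in\{1,2,3\}$, a set $D_i$ consisting of $p_i$ together with the internal vertices of $P_i$.

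To verify that this really is a $K_{3,3}$-minor, I would check three things. (i) Pairwise disjointness: the path interiors are disjoint by the independence of the $P_j$, the named vertices $a,b,p_1,p_2,p_3$ are distinct, and $c^*\notin V(\inc(F'))$. (ii) Connectedness of each $D_i$: the internal vertices of $P_i$ form a subpath, and $p_i$ hangs off it via its edge to $v_i$. (iii) The nine cross-adjacencies of $K_{3,3}$: the neighbour of $a$ on $P_i$ (respectively, of $b$) lies in $D_i$, providing the edges to $\{a\}$ and $\{b\}$; and $c^*$ is adjacent to $p_i\in D_i$ because $p_i\in\var(F')$ and $c^*$ is universal.

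The part that requires care is handling degenerate configurations of the obstruction, most notably when some $P_i$ has length~1 so that its interior is empty, or when $p_i$ attaches to an endpoint of $P_i$. In such cases the naive branch set $D_i$ can fail to be connected to both $\{a\}$ and $\{b\}$. I would resolve this by merging $p_i$ with the endpoint of $P_i$ to which it is adjacent into a joint branch set and rebalancing the bipartition accordingly, exploiting that in the bipartite incidence graph the endpoint of $P_i$ that is a variable is automatically adjacent to $c^*$ in $\incu(F')$. Making this case analysis tight is what I expect to be the main technical burden of the argument.
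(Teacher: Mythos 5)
Your core construction coincides with the paper's proof: the paper exhibits the same $K_{3,3}$-model in $\incu(F')$, with the singletons $a$, $b$, $c^*$ on one side and the subgraphs induced by $(P_i\cup\{p_i\})\setminus\{a,b\}$ on the other, and then invokes the planarity criterion of Kratochv\'{\i}l and K\v{r}iv\'{a}nek together with Wagner's theorem, exactly as you propose. Your checks (i)--(iii) are precisely the verifications the paper leaves implicit, and for obstructions in which every $P_i$ has a nonempty interior and $p_i$ is attached to an internal vertex of $P_i$ (or lies on that interior) your argument is complete and is the paper's argument.

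Where you go beyond the paper is the treatment of degenerate configurations, and there your proposed repair (merging $p_i$ with the endpoint of $P_i$ it attaches to and ``rebalancing the bipartition'') cannot be carried out. The reason is not that the case analysis is delicate, but that under the fully literal reading of the definition such a configuration can occur inside a formula that \emph{is} nested, so no choice of branch sets can produce a $K_{3,3}$ or $K_5$ minor. Concretely, take $c_1=(x\vee y\vee z\vee w)$, $c_2=(x\vee y)$, $c_3=(x\vee z)$, and in $\inc(F')$ the vertices $a=x$, $b=c_1$, the paths $P_1=(x,c_1)$, $P_2=(x,c_2,y,c_1)$, $P_3=(x,c_3,z,c_1)$, and $p_1=w$, $p_2=y$, $p_3=z$: all requirements of the definition are met, yet under the ordering $w<y<x<z$ the clauses $c_2$ and $c_3$ occupy consecutive variables and straddle nothing, so no two clauses overlap and $F'\in\Nested$ (equivalently, $\incu(F')$ is planar). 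Hence the degenerate case is not something to absorb by re-partitioning; it has to be excluded by reading the \obs so that each $p_i$ is attached to an internal vertex of $P_i$, which is exactly what makes each branch set $(P_i\cup\{p_i\})\setminus\{a,b\}$ connected and adjacent to both $a$ and $b$. This non-degeneracy is tacitly assumed in the paper's proof and is satisfied by every \obs the paper actually constructs (the grid obstructions have paths on at least four vertices, and the obstructions built later attach $p_i$ to clauses interior to the paths), so the right fix is to state that assumption explicitly rather than to attempt the merging argument.
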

\begin{proof}
 Suppose $\inc(F')$ contains a \obs.
 We will exhibit a $K_{3,3}$-minor in $\incu(F')$.
 A model of a $K_{3,3}$ can be obtained by taking the subgraphs consisting of the singleton vertices $a$, $b$, and the universal clause $c^*$
 for one side of the bipartition, and the three subgraphs induced by $(P_i\cup \{p_i\})\setminus \{a,b\}, 1\le i\le 3,$ for the other side.
 Since, by Wagner's Theorem \cite{Wagner37}, no planar graph has a $K_{3,3}$ as a minor, and by a result of
 Kratochv\'{\i}l and K\v{r}iv\'{a}nek \cite{KratochK93}, $F'$ is nested
 \myiff $\incu(F')$ is planar, we conclude that
 $F' \notin \Nested$.
\myqed \end{proof}

\noindent
By Lemma \ref{lem:obs}, we have that for each assignment to the variables of a 
\SNBDS, at least one variable from each \obs vanishes in the reduced formula.
Using the $r$-grid, we now find a set $\cO$ of $\kobs$ vertex-disjoint \obss in $G$.

\begin{restatable}{lemma}{LemGrid}\label{lem:grid}\shortversion{\textup{($\star$)}}
 Given a $\kgrid$-grid minor of $G=\inc(F)$, a set of $\kobs$ vertex-disjoint \obss can be found in polynomial time.
\end{restatable}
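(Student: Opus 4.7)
The plan is to carve the given $\kgrid$-grid minor of $G=\inc(F)$ into pairwise vertex-disjoint $4\times 4$ sub-grid minors and extract one \obs from each. Since $\kgrid = 4\sqrt{\kobs+1}$, slicing the rows and columns into consecutive blocks of four yields $(\kgrid/4)^2 \ge \kobs$ pairwise vertex-disjoint $4\times 4$ sub-grids of the big grid; the corresponding branch sets of $G$ are pairwise vertex-disjoint as well. So it suffices to construct a single \obs inside each $4\times 4$ sub-grid minor, and the union of these will be the desired collection.

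Fix a $4\times 4$ sub-grid with branch sets $C_{i,j}$ for $1\le i,j\le 4$, and set $s:=(2,1)$, $t:=(3,4)$. In the abstract grid, there are three pairwise internally vertex-disjoint $s$--$t$ paths
\begin{align*}
 Q_1 &: (2,1),(1,1),(1,2),(1,3),(1,4),(2,4),(3,4),\\
 Q_2 &: (2,1),(2,2),(2,3),(3,3),(3,4),\\
 Q_3 &: (2,1),(3,1),(4,1),(4,2),(4,3),(4,4),(3,4).
\end{align*}
For each grid-edge $uv$ traversed by some $Q_i$, fix an edge of $G$ with one endpoint in $C_u$ and one in $C_v$, as guaranteed by the minor structure. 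Let $y_1,y_2,y_3\in C_s$ be the $C_s$-endpoints of the three fixed edges leaving $C_s$ along $Q_1,Q_2,Q_3$, and let $y_1',y_2',y_3'\in C_t$ be the analogous entry points into $C_t$.

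The main obstacle is ensuring internal vertex-disjointness of the three resulting $G$-paths inside the shared branch sets $C_s$ and $C_t$, since the exit vertices $y_1,y_2,y_3$ need not be distinct. I handle this by taking a spanning tree $T_s$ of $C_s$ and picking a ``Steiner point'' $a\in C_s$ so that the three $T_s$-paths from $a$ to $y_1,y_2,y_3$ are pairwise internally vertex-disjoint. Such an $a$ always exists: if the $y_i$ are distinct, take $a$ to be the unique branching vertex of their Steiner subtree in $T_s$ (or, when that subtree is a path, its middle $y_i$); if two or three $y_i$ coincide, let $a$ be that common vertex. Pick $b\in C_t$ symmetrically. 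For every intermediate branch set on $Q_i$, fix any path between its prescribed entry and exit, which exists because branch sets are connected. Concatenating along $Q_i$ produces three independent $a$--$b$ paths $P_1,P_2,P_3$ in $G$.

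Finally, since each $P_i$ visits at least five distinct branch sets, it has length at least four and thus at least three interior vertices. Because $\inc(F)$ is bipartite between $\var(F)$ and $\cla(F)$ and the vertices of $P_i$ alternate sides, at least one interior vertex of $P_i$ is a variable; pick such a vertex as $p_i$. The five vertices $a,b,p_1,p_2,p_3$ are pairwise distinct (each $p_i$ lies in the pairwise disjoint interior of $P_i$ and cannot coincide with the endpoints $a,b$), so $a,b,p_1,p_2,p_3$ together with $P_1,P_2,P_3$ form an $a$--$b$ \obs. Every vertex used lies in the branch sets of the current $4\times 4$ sub-grid, so the obstructions obtained from different sub-grids are vertex-disjoint, and each step above is easily carried out in polynomial time once the $\kgrid$-grid minor is provided.
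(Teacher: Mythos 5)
Your per-block construction is sound and follows essentially the same route as the paper's proof: fix three internally disjoint paths between two designated vertices of a small abstract sub-grid, realize them in $G$ via one representative edge per grid edge and spanning trees of the branch sets, choose a branching (Steiner) vertex inside the two shared branch sets so that the three resulting $G$-paths are independent, and use bipartiteness of $\inc(F)$ to find an interior variable of each path to serve as $p_i$. All of that matches the paper's argument in substance.

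The gap is in the block count. A $\kgrid$-grid contains only $\lfloor \kgrid/4\rfloor^2$ pairwise disjoint $4\times 4$ blocks, not $(\kgrid/4)^2$, and $\kgrid=4\sqrt{\kobs+1}$ is in general neither an integer nor a multiple of $4$ (note that $\kobs+1=15\cdot 2^{3k+2}+k+1$ need not be a perfect square). The built-in slack of a single obstruction does not absorb this rounding: for instance, for $k=3$ one has $\kobs=30723$ and $\kgrid=4\sqrt{30724}<702$, yet even a $702$-grid yields only $\lfloor 702/4\rfloor^2=175^2=30625<\kobs$ disjoint $4\times 4$ blocks, so your construction produces too few \obss. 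This is exactly why the paper packs each \obs into a $3\times 4$ block rather than a $4\times 4$ one: taking $a=(3i-1,4j-3)$ and $b=(3i-1,4j)$ joined by the three row-paths of the block gives $\lfloor\kgrid/3\rfloor\cdot\lfloor\kgrid/4\rfloor\ge(\kgrid/3-1)(\kgrid/4-1)\ge(\kgrid/4+1)(\kgrid/4-1)=\kobs$ blocks, using $\kgrid\ge 24$ for $k\ge 1$. Your proof is repaired by the same change: the corner-to-corner layout is unnecessary, since three independent paths between two vertices in the same row already fit into three rows and four columns (each still passing through at least four branch sets, so the interior-variable argument survives), and the rest of your construction goes through unchanged.
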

\longversion{\begin{proof}
Let $i$ and $j$ be two integers with $1\le i\le \kgrid/3$ and $1\le j\le \kgrid/4$.
In the $\kgrid$-grid, consider the two vertices $a$ and $b$,
with $a=(3i-1,4j-3)$ and $b=(3i-1,4j)$,
and 3 independent $a$--$b$ paths containing only vertices with
horizontal index between $3i-2$ and $3i$ and vertical index between
$4j-3$ and $4j$. See Figure \ref{fig:grid}.
Denote this subgraph of the grid by $Q$.

%

%

In the $r$-grid model, each vertex $v$ from $Q$ corresponds to a connected subgraph $C_v$.
For each edge $uv$ from $Q$, select one representative edge with one endpoint in $C_u$ and the other endpoint in $C_v$.
For each vertex $v$ from $Q$, compute a spanning tree of $C_v$.
Consider the set $N_v$ of vertices from $C_v$ that are incident to representative edges.
Note that $2\le |N_v|\le 3$.
Select one representative vertex $r_v$ from $C_v$ such that the spanning tree contains independent paths from $r_v$ to each vertex in $N_v$.
If $|N_v|=2$, $r_v$ is any vertex from the unique subpath of the spanning tree connecting the two vertices from $N_v$.
If $|N_v|=3$, $r_v$ is the unique vertex that is on all subpaths of the spanning tree pairwise connecting vertices from $N_v$.
A \obs in $G$ is now formed by connecting each representative vertex to their representative edges by independent paths
chosen as subpaths of the spanning trees.
Note that an $r_a$--$r_b$ \obs in $\cO$ has 3 independent $r_a$--$r_b$ paths with length at least $4$, and thus, each of them contains
at least one variable. Moreover, the \obs contains only vertices from the connected subgraphs of $G$ corresponding to
vertices from the grid with horizontal index between $3i-2$ and $3i$ and vertical index between
$4j-3$ and $4j$.
Thus, any two \obss defined from distinct $(i,j)$ are vertex-disjoint.

The number of \obss defined this way is
$\lfloor \frac{\kgrid}{3} \rfloor \cdot \lfloor \frac{\kgrid}{4} \rfloor \ge (\frac{\kgrid}{3}-1) \cdot (\frac{\kgrid}{4}-1)$.
Since $\kgrid\ge 24$ if $k\ge 1$, we have that $\frac{\kgrid}{3}-1 \ge \frac{\kgrid}{4}+1$.
Thus, the number of \obss is at least $(\frac{\kgrid}{4})^2-1 \ge \kobs$.
\myqed \end{proof}
}

\begin{figure}[tb]
 \centering
  \begin{tikzpicture}[xscale=1,yscale=0.8]
   \pgfmathtruncatemacro\dim{6}
   \foreach \x in {1,2,...,\dim}
    \foreach \y in {1,2,...,\dim} {
     \node at (\x,\y) [var] {};
     \ifnum\x<\dim
       \draw (\x,\y) -- +(1,0);
     \fi
     \ifnum\y<\dim
       \draw (\x,\y) -- +(0,1);
     \fi
    }

    \node at (2,1) [var,inner sep=.2em,fill=red,draw=red,label=below:{$a=(2,1)$}] {};
    \node at (2,4) [var,inner sep=.2em,fill=red,draw=red,label=above:{$b=(2,4)$}] {};
    \draw[ultra thick,red] (1,1)--(1,4)--(3,4)--(3,1)--(1,1) (2,1)--(2,4);
  \end{tikzpicture}
  \caption{The $6$-grid and a highlighted \obs.}
  \label{fig:grid}
\end{figure}

\noindent
Denote by $\cO$ a set of $\kobs$ vertex-disjoint \obss obtained via Lemma \ref{lem:grid}.
A backdoor variable can destroy a \obs either because it participates in the \obs, or because
every setting of the variable satisfies a clause that participates in the \obs.

\begin{definition}
Let $x$ be a variable and $O$ a \obs in $G$.
We say that $x$ \emph{kills} $O$ if neither $\inc(F[x = \true])$ nor $\inc(F[x = \false])$ contains $O$ as a subgraph.
We say that $x$ kills $O$ \emph{internally} if $x\in \var(O)$, and that $x$ kills $O$ \emph{externally} if $x$ kills $O$ but does not kill it internally.
In the latter case, $O$ contains a clause $c$ containing $x$ and a clause $c'$ containing $\neg x$ and we say that
$x$ kills $O$ (externally) \emph{in} $c$ and $c'$.
\end{definition}

\noindent
Our algorithm will make a series of $O(1)$ guesses about the \SNBDS, where each guess is made out of a number of choices that is upper bounded by a function of $k$.
At any stage of the algorithm, a \emph{valid} \SNBDS is one that conforms to the guesses that have been made.
For a fixed series of guesses, the algorithm will compute a set $S\subseteq \var(F)$ such that every valid \SNBDS of size at most $k$ contains a variable from $S$. The union of all
such $S$, taken over all possible series of guesses, forms a set $S^*$ and each \SNBDS of size at most $k$ contains a variable from $S^*$.
Bounding the size of each $S$ by a function of $k$ enables us to bound $|S^*|$ by a function of $k$, and $S^*$ can then be used in a bounded search tree algorithm
(see Subsection~\ref{subsec:algo}).

For any \SNBDS of size at most $k$, at most $k$ \obss from $\cO$ are killed internally since they are vertex-disjoint.
The algorithm guesses $k$ \obss from $\cO$ that may be killed internally.
Let $\cO'$ denote the set of the remaining \obss, which need to be killed externally.

Suppose $F$ has a \SNBDS $B$ of size $k$ killing no \obs from $\cO'$ internally. Then, $B$ defines a partition of $\cO'$ into
$2^k$ parts where for each part, the \obss contained in this part are killed externally by the same set of variables from $B$.
Since $|\cO'| = \kobs - k = 2^k \cdot \ksame$, at least one of these parts contains at least $\ksame$ \obss from $\cO'$.
The algorithm guesses a subset $\cO_s \subseteq \cO'$ of $\ksame$ \obs from this part and it guesses how many variables from the
\SNBDS kill the obstructions in this part externally.

Suppose each \obs in $\cO_s$ is killed externally by the same set of~$\ell$ backdoor variables, and no other backdoor variable
kills any \obs from $\cO_s$. Clearly, $1\le \ell \le k$.
Compute the set of external killers for each \obs in $\cO_s$. Denote by $Z$ the common external killers of the \obs in $\cO_s$.
The presumed \BDS contains exactly $\ell$ variables from $Z$ and no other variable from the \BDS kills any \obs from~$\cO_s$.

We will define three rules for the construction of $S$, and the algorithm will execute the first applicable rule.

\begin{myrule}[Few Common Killers]\label{rule:fewkillers}
 If $|Z|<|\cO_s|$, then set $S:=Z$.
\end{myrule}
The correctness of this rule follows since any valid \SNBDS contains $\ell$ variables from $Z$ and $\ell\ge 1$.

For each $O\in \cO_s$ we define an auxiliary graph $G_O=(Z,E_O)$ whose edge set is initially empty.
As long as $G_O$ has a vertex $v$ with degree $0$ such that $v$ and some other vertex in $Z$
have a common neighbor from $O$ in $G$, select a vertex $u$ of minimum degree in $G_O$ such
that $u$ and $v$ have a common neighbor from $O$ in $G$ and add the edge $uv$ to $E_O$.
As long as $G_O$ has a vertex $v$ with degree $0$, select a vertex $u$ of minimum degree in $G_O$ such that
$v$ has a neighbor $v'\in V(O)$ in $G$ and $u$ has a neighbor $u'\in V(O)$ in $G$
and there is a $v'$--$u'$ path in $O$ in which no internal vertex is adjacent to a vertex from $Z\setminus \{v\}$;
add the edge $uv$ to $E_O$.

By the construction of $G_O$, we have the following property on the degree of all vertices.

\begin{fact}
 For each $O\in \cO_s$, the graph $G_O$ has minimum degree at least $1$. 
\end{fact}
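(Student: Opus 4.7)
The plan is to show that after both while-loops in the construction of $G_O$ terminate, every vertex of $Z$ receives at least one incident edge. The key observation is that each $v\in Z$, being an external killer of $O$, has at least two neighbors in $V(O)$ in $G$: a clause $c$ with $v\in \lit(c)$ and a clause $c'$ with $\neg v \in \lit(c')$. Moreover, since Rule~\ref{rule:fewkillers} was not triggered we may assume $|Z|\geq |\cO_s|\geq 2$, so $Z\setminus\{v\}\neq \emptyset$ and each element also has a neighbor in $V(O)$.

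Fix $v\in Z$. Whenever $v$ is selected as the degree-zero endpoint in either loop, the edge added is incident to $v$, so it suffices to guarantee that at least one of the two loops is able to process $v$ before terminating. I would split into two cases. If some $u\in Z\setminus\{v\}$ shares a $V(O)$-neighbor with $v$ in $G$, then the first loop's selection condition at $v$ is immediately met and $v$ gets an incident edge during the first loop. Otherwise, $N_G(v)\cap V(O)$ is disjoint from $N_G(u)\cap V(O)$ for every $u\in Z\setminus\{v\}$, and $v$ enters the second loop still with degree $0$; one then has to verify that the second loop's selection condition is satisfiable at $v$.

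For the verification, I would exhibit a witness explicitly. Pick any $v'\in N_G(v)\cap V(O)$, and among all vertices of $V(O)$ adjacent in $G$ to some vertex of $Z\setminus\{v\}$, let $u'$ be one minimizing the distance from $v'$ inside the connected graph $O$; let $u\in Z\setminus\{v\}$ be a witness with $u u'\in E(G)$. The candidate set is nonempty because $|Z\setminus\{v\}|\geq 1$ and every element of $Z\setminus\{v\}$ has at least one neighbor in $V(O)$. Under the current case hypothesis $v'$ itself has no $Z\setminus\{v\}$-neighbor, so $v'\neq u'$ and a shortest $v'$--$u'$ path $P$ in $O$ has at least one edge. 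By the minimality of $u'$, no internal vertex of $P$ is adjacent in $G$ to any vertex of $Z\setminus\{v\}$, since any such internal vertex would be a strictly closer valid candidate, contradicting the choice of $u'$. Hence the triple $(u,v',u')$ meets the second loop's condition at $v$, and $v$ receives an incident edge when processed.

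The argument contains no deep obstacle; the only nontrivial step is the right minimization, namely choosing $u'$ to be closest in $O$ to $v'$ among the vertices of $V(O)$ whose $G$-neighborhood meets $Z\setminus\{v\}$, so that the ``no internal vertex is adjacent to $Z\setminus\{v\}$'' clause of the selection condition becomes automatic from the shortest-path structure.
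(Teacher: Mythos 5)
Your proof is correct: the paper states this fact without an explicit argument (it is claimed to follow ``by the construction of $G_O$''), and your reasoning supplies exactly the intended missing justification — every $v\in Z$ has neighbors in $V(O)$ because it kills $O$ externally, $|Z|\ge 2$ since Rule~\ref{rule:fewkillers} did not fire, and choosing $u'$ as a vertex of $V(O)$ nearest to $v'$ in the connected graph $O$ among those with a $G$-neighbor in $Z\setminus\{v\}$ makes the second loop's selection condition satisfiable, so neither loop can terminate while a degree-zero vertex remains. In short, it is the same (implicit) approach as the paper, carried out in full detail, including the needed observation that the case $|Z|\le 1$ is excluded.
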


Recall that no clause contains complimentary literals.
Consider two variables $u,v\in Z$ that share an edge in $G_O$.
By the construction of $G_O$, there is a $u$--$v$ path $P$ in $G$ whose internal edges are in $O$, such
that for each variable $z\in Z$, all edges incident to $z$ and a clause from $P$ have the same sign.
Moreover, since no variable from a valid \SNBDS kills $O$ externally, unless it is in $Z$,
for each potential backdoor variable $x\in \var(F) \setminus Z$, all edges incident to $x$ and a clause from $P$ have the same sign.
Thus, we have the following fact.

\begin{fact}\label{fact:2}
 If $u,v\in Z$ share an edge in $G_O$, then for every valid \SNBDS that does not contain $u$ and $v$, there is a truth assignment
 $\tau$ to $B$ such that $\inc(F[\tau])$ contains a $u$--$v$ path whose internal edges are in $O$.
\end{fact}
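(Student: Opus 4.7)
My plan is to exhibit, for any valid \SNBDS $B$ with $u,v\notin B$, an explicit assignment $\tau\in 2^B$ under which the particular $u$--$v$ path $P$ guaranteed by the construction of the edge $uv\in E_O$ survives as a subgraph of $\inc(F[\tau])$. Since $P$ already has its internal edges in $O$, this suffices.

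To construct $\tau$, I would use the sign-consistency statements recorded immediately before the fact: for every $z\in Z$, and for every potential backdoor variable $z\in \var(F)\setminus Z$, all edges from $z$ to clauses of $P$ carry the same sign. For each $z\in B$ that meets a clause of $P$, I set $\tau(z)$ to the truth value that falsifies the literal of $z$ in every clause of $P$ in which $z$ occurs; variables of $B$ with no edge to any clause of $P$ may be set arbitrarily. This is unambiguous precisely because every $z\in B$ is a potential backdoor variable, and so falls under one of the two sign-consistency statements.

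The survival of $P$ in $\inc(F[\tau])$ then reduces to four checks. (i) The endpoints $u,v$ are outside $B$ by hypothesis, so they persist as vertices, together with the literals they contribute to their $P$-neighbors. (ii) Every clause $c$ on $P$ survives: the only way $\tau$ could satisfy $c$ would be through some $z\in B\cap \var(c)$, but by the choice of $\tau(z)$ the literal of $z$ in $c$ is falsified, not satisfied. (iii) Each internal variable $w$ of $P$ lies in $\var(O)$, and since the guess places $O$ in $\cO'$ we have $B\cap \var(O)=\emptyset$, hence $w\notin B$ and $w$ remains a vertex of $\inc(F[\tau])$. (iv) Each edge of $P$ joins a surviving variable to a surviving clause and is therefore retained when passing to $\inc(F[\tau])$.

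The main obstacle, and the only place where the combinatorics really bites, is check (iii): one has to argue that no internal variable of $P$ can itself be a backdoor variable. This uses two ingredients together, the guess that the obstruction $O\in \cO'$ is to be killed externally (so a valid backdoor set avoids $\var(O)$) and the property built into the construction of $G_O$ that the internal vertices of $P$ on the $O$-portion are in $V(O)$. Once (iii) is in place, steps (i), (ii), and (iv) are routine consequences of the definitions of $F[\tau]$ and its incidence graph, and the fact follows.
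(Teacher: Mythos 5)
Your proof is correct and follows essentially the same route as the paper, which derives the fact directly from the two sign-consistency observations stated just before it: you pick the path $P$ from the construction of $G_O$, assign each backdoor variable so as to falsify its (uniform-sign) literals in the clauses of $P$, and note that no clause or internal vertex of $P$ disappears. Your explicit check (iii) — that internal variables of $P$ lie in $\var(O)$ and hence avoid any valid backdoor set because obstructions in $\cO'$ are only killed externally — is exactly the point the paper leaves implicit, so there is no substantive difference.
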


Consider the multigraph $G_m(\cO_s)= (Z, \biguplus_{O\in \cO_s} E_O)$, \ie, the union of all $G_O$ over all
$O\in \cO_s$, where the multiplicity of an edge is the number of distinct sets $E_O$ where it appears, $O\in \cO_s$.

\begin{myrule}[Multiple Edges]\label{rule:multi}
 If there are two vertices $u,v\in Z$ such that $G_m(\cO_s)$ has a $u$--$v$ edge with multiplicity at
 least $2 \cdot 2^{k}+1$, then set $S:=\{u,v\}$.
\end{myrule}
Consider any valid \SNBDS $B$ of size $k$. Then, by Fact \ref{fact:2}, for each $u$--$v$ edge
there is some truth assignment $\tau$ to $B$ such that $\inc(F[\tau])$ contains a $u$--$v$ path in $G$. Moreover, since each
$u$--$v$ edge comes from a different $O\in \cO_s$, all these $u$--$v$ paths are independent.
Since there are $2^k$ truth assignments to $B$ but at least $2 \cdot 2^{k}+1$ $u$--$v$ edges, for at least one
truth assignment $\tau$ to $B$, there are 3 independent $u$--$v$ paths $P_1,P_2,P_3$ in $\inc(F[\tau])$.
We obtain a $u$--$v$ \obs choosing as $p_i, 1\le i\le 3$, a variable from $P_i$ or a variable neighboring a clause
from $P_i$ and belonging to the same \obs in $\cO_s$.
Thus, any valid \SNBDS contains $u$ or $v$.

Now, consider the graph $G(\cO_s)$ obtained from the multigraph $G_m(\cO_s)$ by merging multiple edges,
\ie, we retain each edge only once.

\begin{myrule}[No Multiple Edges]\label{rule:nomulti}
 Set $S$ to be the $2k$ vertices of highest degree in $G(\cO_s)$ (ties are broken arbitrarily).
\end{myrule}
For the sake of contradiction, suppose $F$ has a valid \SNBDS $B$ of size~$k$ with $B\cap S = \emptyset$.
First, we show a lower bound on the number of edges in $G(\cO_s) - B$.
Since $G_m(\cO_s)$ has at least $\frac{|Z|}{2} \ksame$ edges and each edge has multiplicity at most $2^{k+1}$, the
graph $G(\cO_s)$ has at least $\frac{|Z| \ksame}{2 \cdot 2^{k+1}} = 3\cdot 5\cdot 2^k \cdot |Z|$ edges.
Let $d$ be the sum of the degrees in $G(\cO_s)$ of the vertices in $B \cap Z$.
Now, the sum of degrees of vertices in $S$ is at least $2d$ in $G(\cO_s)$,
and at least $d$ in $G(\cO_s) - B$. Therefore, $G(\cO_s) - B$ has at least $d/2$ edges.
On the other hand, the number of edges deleted to obtain $G(\cO_s) - B$ from $G(\cO_s)$ is at most $d$.
It follows that the number of edges in $G(\cO_s) - B$ is at least a third the number of edges in $G(\cO_s)$,
and thus at least $5\cdot 2^k \cdot |Z|$.

Now, we iteratively build a truth assignment $\tau$ for $B$. Set $H:=G(\cO_s) - B$.
Order the variables of $B$ as $b_1, \dots, b_k$. For increasing~$i$, we set
$\tau(b_i)=0$ if in $G$, the vertex $v\in B$ is adjacent with a positive edge to more paths that correspond to an edge in $H$
than with a negative edge and set $\tau(b_i)=1$ otherwise; if $\tau(b_i)=0$, then remove each edge from $H$ that corresponds to a path
in $G$ that is adjacent with a negative edge to $b_i$, otherwise remove each edge from $H$ that corresponds to a path
in $G$ that is adjacent with a positive edge to $b_i$.

Observe that for a variable $v\in B$ and a path $P$ in $G$ that corresponds to an edge in $G(\cO_s) - B$,
$v$ is not adjacent with a positive and a negative edge to $P$. If $v\in Z$ this follows by the construction of $G_O$,
and if $v\notin Z$, this follows since $v$ does not kill any \obs from $\cO_s$.
Therefore, each of the $k$ iterations building the truth assignment $\tau$ has removed at most half the edges of $H$.
In the end, $H$ has at least $5 |Z|$ edges.

Next, we use the following theorem of Kirousis \etal~\cite{KirousisSS93}.

\begin{theorem}[\cite{KirousisSS93}]
 If a graph has $n$ vertices and $m>0$ edges, then it has an induced subgraph that is $\lceil \frac{m+n}{2n} \rceil$-vertex-connected.
\end{theorem}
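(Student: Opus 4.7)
The plan is to prove the theorem by induction on $n$. Set $k = \lceil (m+n)/(2n) \rceil$, so the goal is to exhibit an induced subgraph that is $k$-vertex-connected. The case $k = 1$ is trivial: since $m \ge 1$, the two endpoints of any edge form a $1$-connected induced two-vertex subgraph. Hence assume $k \ge 2$, and in particular $m + n > 2n(k-1)$.

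For the inductive step, consider first the case that $G$ contains a vertex $v$ with $\deg_G(v) \le k - 1$. Let $G' = G - v$, so $n' = n - 1$ and $m' \ge m - (k-1)$. A routine calculation using $m + n > 2n(k-1)$ verifies that $\lceil (m' + n')/(2n') \rceil \ge k$, so the induction hypothesis applied to $G'$ yields a $k$-connected induced subgraph of $G'$, which is also induced in $G$.

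Otherwise every vertex of $G$ has degree at least $k$. If $G$ is $k$-connected we are done, so assume there exists a vertex-cut $S$ of size at most $k - 1$ and write $V \setminus S = A \uplus B$ with $A, B \ne \emptyset$ and no edges between $A$ and $B$. Consider the induced subgraphs $G_1 = G[A \cup S]$ and $G_2 = G[B \cup S]$, both strictly smaller than $G$. Writing $m_i, n_i$ for their edge and vertex counts, we have the identities $m_1 + m_2 = m + e(S)$ and $n_1 + n_2 = n + |S|$. Moreover, because every vertex in $A$ has all of its $\ge k$ neighbors in $A \cup S$, one obtains $2 e(A) + e(A,S) \ge k|A|$, and symmetrically for $B$; these degree-based inequalities give lower bounds on $m_1$ and $m_2$.

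The crux is then to show that at least one of $\lceil (m_i + n_i)/(2 n_i) \rceil \ge k$, so that induction applied to the corresponding $G_i$ yields the desired $k$-connected induced subgraph of $G$. One argues by contradiction: if both quantities are at most $k - 1$, then $m_i + n_i \le 2 k n_i - 1$ for $i = 1, 2$. Summing these and substituting the two identities above, together with the bounds $|S| \le k - 1$, the degree-based inequalities, and the hypothesis $m + n \ge 2n(k-1) + 1$, yields the required contradiction. Executing this final counting step cleanly is the main obstacle of the proof: one must balance the ceiling slack against the minimum-degree lower bounds and the cut-size constraint, very likely requiring a case split according to whether the smaller side $A$ has fewer than, or at least, $2k$ vertices so that the stronger bound $e(A) + e(A,S) \le \binom{|A|}{2} + |A|\cdot|S|$ can be used when $|A|$ is small.
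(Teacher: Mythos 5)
The paper does not prove this statement at all---it is imported verbatim from Kirousis et al.\ \cite{KirousisSS93}---so your argument has to stand on its own, and it does not: the step you yourself flag as the ``crux'' is not merely an unexecuted computation, it is false under the hypotheses you have at that point. In your separator case you know only that $G$ has minimum degree at least $k$, that $|S|\le k-1$, and that $m\ge (2k-3)n+1$ (this is what $k\ge 2$ gives), and you need that at least one side $G_i$ satisfies $\lceil (m_i+n_i)/(2n_i)\rceil\ge k$, i.e.\ $m_i\ge (2k-3)n_i+1$. Counterexample: let $H$ be $K_8$ minus a perfect matching and let $G$ be two copies of $H$ glued along two vertices $s_1,s_2$ that are nonadjacent in both copies. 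Then $n=14$, $m=48$, so $k=\lceil 62/28\rceil=3$ and $m\ge 3n+1$; the minimum degree is $6\ge k$ (indeed $\ge 2k-2$, so even Mader's stronger degree threshold for the deletion case would not divert this example); $G$ is not $3$-connected, and its unique cut of size at most $2$ is $S=\{s_1,s_2\}$ with $e(S)=0$. Both sides have $n_i=8$ and $m_i=24=3n_i$, hence $\lceil (m_i+n_i)/(2n_i)\rceil=2<k$ for both. All the auxiliary facts you propose to use ($m_1+m_2=m+e(S)$, $n_1+n_2=n+|S|$, $2e(A)+e(A,S)\ge k|A|$, a case split on $|A|<2k$ --- here $|A|=6=2k$) hold simultaneously, so no amount of careful counting can produce the contradiction: the invariant ``$m>(2k-3)n$'' is simply not inherited by either side, even though each side of course still contains a $3$-connected induced subgraph. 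The problem is structural: gluing along $S$ double-counts $|S|$ vertices in $n_1+n_2$, which costs up to $(2k-3)(k-1)$ edges of slack, while your hypothesis provides slack $1$.

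Your base case and vertex-deletion case are fine; what must change is the induction invariant. This is exactly why Mader's classical proof of the corresponding theorem inducts on the weaker, self-reproducing bound $m\ge (2k-3)(n-k+1)+1$ (for $n\ge 2k$): the additive correction $(2k-3)(k-1)$ absorbs the double counting of $S$. A clean repair of your write-up is therefore either to rerun your induction with Mader's invariant, or simply to derive the stated theorem from Mader's theorem: for $k=\lceil (m+n)/(2n)\rceil\ge 2$ one has $m\ge (2k-3)n+1\ge (2k-3)(n-k+1)+1$, and since $m\le\binom{n}{2}$ the same inequality forces $n\ge 4k-4\ge 2k$, so Mader gives a $k$-connected subgraph $H$, and $G[V(H)]$ is then a $k$-connected \emph{induced} subgraph. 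As written, however, your proposal has a genuine gap at its decisive step.
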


\noindent
We conclude that $H$ has an induced subgraph $H'$ that is $3$-vertex-connected.
Let $x,y\in V(H')$. We use Menger's theorem \cite{Menger27}.

\begin{theorem}[\cite{Menger27}]
 Let $G=(V,E)$ be a graph and $x,y\in V$. Then the size of a minimum $x,y$-vertex-cut in $G$ is
 equal to the maximum number of independent $x$--$y$ paths in $G$.
\end{theorem}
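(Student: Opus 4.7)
The plan is to prove Menger's theorem by establishing the two inequalities separately. The easy direction, that the maximum number of independent $x$--$y$ paths is at most the minimum $x,y$-vertex-cut, follows immediately: given any family of $k$ internally vertex-disjoint $x$--$y$ paths, every $x,y$-vertex-cut must intersect each path in an internal vertex, and since these internal vertices are pairwise distinct, the cut has size at least $k$.

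For the hard direction I would argue by induction on $|E(G)|$. Let $k$ denote the size of a minimum $x,y$-vertex-cut in $G$; we want to exhibit $k$ internally vertex-disjoint $x$--$y$ paths. Pick an edge $e = uv$ with $u,v \notin \{x,y\}$ (if no such edge exists, the structure of $G$ is constrained enough that the statement can be checked directly). If the minimum $x,y$-vertex-cut in $G-e$ is still $k$, then the inductive hypothesis applied to $G-e$ yields $k$ independent paths which remain valid in $G$. Otherwise $G-e$ has a cut $T$ of size $k-1$; then $T \cup \{u\}$ and $T \cup \{v\}$ are $x,y$-vertex-cuts in $G$, each of size exactly $k$. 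Applying the induction hypothesis to the two graphs obtained by contracting the $y$-side of $T \cup \{u\}$ and the $x$-side of $T \cup \{v\}$ to a single endpoint produces two families of $k$ internally vertex-disjoint paths, which can be spliced through $T$ into a family of $k$ globally independent $x$--$y$ paths in $G$.

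The main obstacle is the splicing step in the second subcase: one must verify that the contracted auxiliary graphs strictly lose edges so the induction terminates, and check that the two local path families match up consistently along $T$ to form globally internally vertex-disjoint $x$--$y$ paths. A cleaner alternative that sidesteps this combinatorial bookkeeping is to reduce the vertex version of Menger's theorem to the edge version by the standard vertex-splitting construction: replace each vertex $v \notin \{x,y\}$ by two copies $v', v''$ joined by an edge, redirecting incoming edges to $v'$ and outgoing edges from $v''$. Under this transformation, $x,y$-vertex-cuts in $G$ correspond bijectively to $x,y$-edge-cuts in the new graph, and internally vertex-disjoint $x$--$y$ paths correspond to edge-disjoint $x$--$y$ paths; the edge version then follows directly from the integrality of a maximum $x$--$y$ flow in the unit-capacity network, i.e., from max-flow min-cut.
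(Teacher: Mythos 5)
This statement is Menger's 1927 theorem, which the paper does not prove at all: it is quoted with a citation and used as a black box (to extract three independent $x$--$y$ paths from a $3$-connected subgraph $H'$). So there is no in-paper proof to compare against; the only question is whether your sketch is a sound proof of the classical result, and essentially it is. The easy inequality is fine, and your induction on $|E|$ is the standard textbook argument: if deleting $e=uv$ preserves the minimum cut size you are done by induction, and otherwise a $(k-1)$-cut $T$ of $G-e$ yields the two minimum cuts $T\cup\{u\}$ and $T\cup\{v\}$ of $G$ (note you should observe $u,v\notin T$, since otherwise $T$ would already be a cut of $G$), whose $x$- and $y$-sides you contract and splice. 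The splicing and termination bookkeeping you flag is indeed where the real work lies, and your alternative route---vertex splitting followed by max-flow min-cut with integral flows---is a complete and standard way to discharge it, though as phrased (``incoming''/``outgoing'' edges) it implicitly passes to the directed setting, so you should first replace each undirected edge by two opposite arcs and split each internal vertex into an in-copy and an out-copy. One small caveat on the statement itself, shared by the paper's formulation: vertex-Menger needs $x$ and $y$ non-adjacent (otherwise no $x,y$-vertex-cut exists), and the paper's application to two arbitrary vertices of a $3$-connected graph really invokes the global (Whitney-type) corollary, which also handles adjacent pairs via the fan or edge version; your max-flow argument covers this case with the same splitting construction by assigning the edge $xy$ infinite capacity or treating it separately.
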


\noindent
Since the minimum size of an $x,y$-vertex cut is at least $3$ in $H'$, there are 3 independent
$x$--$y$ paths in $H'$.
Replacing each edge by its corresponding path in $G$, gives rise to 3 walks from $x$ to $y$ in $G$.
Shortcutting cycles, we obtain three $x$--$y$ paths $P_1,P_2,P_3$ in~$G$.
By construction, each edge of these paths is incident to a vertex from a \obs in~$\cO_s$.
We assume that $P_1,P_2,P_3$ are edge-disjoint.
Indeed, by the construction of the $G_O$, $O\in \cO_s$, they can only share the first and last edges.
In case $P_1$ shares the first edge with $P_2$, replace $x$ by its neighbor on $P_1$,
remove the first edge from $P_1$ and $P_2$, and replace $P_3$ by its symmetric difference with this edge.
Act symmetrically for the other combinations of paths sharing the first or last edge.

\begin{restatable}[\cite{Gaspers12}]{lemma}{LemPaths}\label{lem:paths}\shortversion{\textup{($\star$)}}
 Let $G=(V,E)$ be a graph. If there are two vertices $x,y\in V$ with 3 edge-disjoint $x$--$y$ paths in $G$,
  then there are two vertices $x',y'\in V$ with 3 independent $x'$--$y'$ paths in $G$.
\end{restatable}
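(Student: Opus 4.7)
The plan is to use a minimality / exchange argument. I would consider the collection of all tuples $(x',y',P_1,P_2,P_3)$ where $P_1,P_2,P_3$ are pairwise edge-disjoint $x'$--$y'$ paths in $G$; by hypothesis (applied to $x,y$) this collection is non-empty. Choose a tuple minimizing the total length $|E(P_1)|+|E(P_2)|+|E(P_3)|$. The goal is then to show that in any such minimum configuration the three paths are already internally vertex-disjoint, so $x'$ and $y'$ witness the desired conclusion.

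To establish that claim, I would argue by contradiction: assume some internal vertex $v \in (V(P_1)\cap V(P_2))\setminus\{x',y'\}$ exists (the cases where the shared path pair is $(P_1,P_3)$ or $(P_2,P_3)$ are symmetric). The key construction is a rerouting that produces three edge-disjoint $x'$--$v$ walks: the initial segments $P_1[x',v]$ and $P_2[x',v]$, together with the walk $W$ formed by concatenating $P_3$ (from $x'$ to $y'$) with the portion of $P_1$ from $y'$ to $v$. Pairwise edge-disjointness is checked directly: $P_1[x',v]$ and $P_1[y',v]$ share no edge because $P_1$ is a simple path; the remaining cross-pair disjointnesses follow from the edge-disjointness of $P_1,P_2,P_3$.

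Shortcutting each walk into a simple path (iteratively deleting closed subwalks) only removes edges, so pairwise edge-disjointness is preserved and we obtain three edge-disjoint $x'$--$v$ paths in $G$. Their total length is at most $|E(P_1[x',v])|+|E(P_2[x',v])|+|E(P_3)|+|E(P_1[v,y'])|$, which equals $|E(P_1)|+|E(P_2[x',v])|+|E(P_3)|$ and is therefore strictly smaller than $|E(P_1)|+|E(P_2)|+|E(P_3)|$, since $|E(P_2[v,y'])|\ge 1$ (because $v \neq y'$). The new tuple $(x',v,\ldots)$ is a legal competitor in our collection, contradicting minimality and establishing that no such shared internal vertex exists.

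The main subtlety I anticipate is twofold. First, the minimization must range over \emph{all} endpoint pairs in $V\times V$ admitting three edge-disjoint paths, not only over triples of paths between the originally given $x,y$; otherwise the $x'$--$v$ system produced by the rerouting would not contradict the choice. Second, one has to be careful with the edge-disjointness verification for $W$, in particular exploiting that $P_1[x',v]$ and $P_1[v,y']$ are the two pieces of a single simple path and hence share no edge, and that shortcutting only removes edges rather than adds any, so the reduction in total length is genuine.
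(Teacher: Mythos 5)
Your proof is correct, but it takes a genuinely different route from the paper's. The paper argues constructively in one shot: it keeps $x'=x$, takes the three neighbors $s_1,s_2,s_3$ of $x$ on the edge-disjoint paths (distinct because the paths are edge-disjoint and the graph is simple), observes that all three lie in the connected component $G'$ of $G-\{x\}$ containing $y$, picks a spanning tree $T$ of $G'$, and chooses $y'$ as the median of $s_1,s_2,s_3$ in $T$, i.e.\ the vertex common to the three pairwise tree paths; the paths $x\,s_i$ followed by the $s_i$--$y'$ tree paths are then independent. You instead run an extremal exchange argument: minimize total length over all triples of pairwise edge-disjoint paths between \emph{any} pair of (distinct) endpoints, and show a shared internal vertex $v$ of two paths allows rerouting into a strictly shorter edge-disjoint triple with endpoints $x',v$, contradicting minimality. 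Your edge-disjointness and length accounting are right (the only delicate points, which you handle, are that $P_1[x',v]$ and $P_1[v,y']$ share no edge and that $|E(P_2[v,y'])|\ge 1$), and the one thing to make explicit is that the collection you minimize over should exclude degenerate tuples with $x'=y'$, which is preserved since the new endpoint $v$ is internal and hence distinct from $x'$. As for what each approach buys: the paper's construction is shorter, needs no optimization, and preserves one endpoint ($x'=x$), though that is not used later; your argument is more self-contained (no spanning-tree/median facts) and is the standard template for converting edge-disjointness into internal vertex-disjointness, and like the paper's proof it only uses edges of the given three paths when applied to their union, which is the property the surrounding argument in the paper actually relies on.
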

\longversion{\begin{proof}
Let $P_1,P_2,P_3$ denote 3 edge-disjoint $x$--$y$ paths,
and let $S=\{s_1,s_2,s_3\}$, where $s_i$ neighbors $x$ on $P_i$.
Consider the connected component $G'$ of $G - \{x\}$ containing $y$.
Then, $G'$ contains all vertices from $S$.
Let $T$ be a spanning tree of $G'$.
Select $y'$ to be the vertex belonging to every subpath of $T$ that has
two vertices from $S$ as endpoints.
Set $x':=x$, and obtain 3 independent $x'$--$y'$ paths in $G$ by moving
from $x'$ to $s_i$, and then along the $s_i$--$y'$ subpath of $T$ to $y'$,
$1\le i\le 3$.
\myqed \end{proof}

}

\noindent
By Lemma \ref{lem:paths} we obtain two vertices $x',y'$ in $G$ with 3 independent $x'$--$y'$ paths $P_1',P_2',P_3'$ in $G$.
Since the lemma does not presuppose any other edges in $G$ besides those from the edge-disjoint $x$--$y$ paths,
$P_1',P_2',P_3'$ use only edges from the paths $P_1,P_2,P_3$. Thus, each edge of $P_1',P_2',P_3'$ is incident to
a vertex from a \obs in $\cO_s$.
Thus, we obtain a $x'$--$y'$ \obs with the paths $P_1',P_2',P_3'$, and for each path $P_i'$, we choose a variable from this path
or a variable from~$\cO_s$ neighboring a clause from this path.
We arrive at a contradiction for $B$ being a valid \SNBDS.
This proves the correctness of Rule \ref{rule:nomulti}.

\medskip

The number of possible guesses the algorithm makes is upper bounded by $\binom{\kobs}{k} \cdot \binom{\kobs-k}{\ksame}\cdot k = 2^{O(k^8)}$,
and each series of guesses leads to a set $S$ of at most $\ksame$ variables. Thus, the set $S^*$, the union of all such $S$,
contains at most $2^{O(k^8)}\cdot \ksame = 2^{O(k^{10})}$ variables.
Finally, we have shown the following lemma in this subsection.

\begin{lemma}\label{lem:wall}
  There is an \FPT\ algorithm that, given a CNF formula~$F$, a positive
  integer parameter~$k$, and a $\kgrid$-grid as a minor in $\inc(F)$,
  computes a set $S^*\subseteq \var(F)$ of size $2^{O(k^{10})}$ such
  that every \SNBDS of size at most $k$ contains a variable from $S^*$.
\end{lemma}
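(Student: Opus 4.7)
The plan is to reduce finding candidates for the \SNBDS to a combinatorial analysis of the many vertex-disjoint obstructions one can extract from the grid minor. First I would apply Lemma~\ref{lem:grid} to obtain a family $\cO$ of $\kobs$ vertex-disjoint \obss in $\inc(F)$. By Lemma~\ref{lem:obs}, any \SNBDS $B$ of size at most $k$ must kill every $O\in\cO$, either internally (via one of its own variables) or externally (by satisfying a clause of $O$ under every assignment to $B$). Since the obstructions are pairwise vertex-disjoint, at most $k$ are killed internally; I would branch over the $\binom{\kobs}{k}$ choices of those, so the remaining subfamily $\cO'$ of size $\kobs - k = 2^k\cdot\ksame$ must be killed externally by the valid $B$.

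Next comes a pigeonhole step: any valid $B$ partitions $\cO'$ according to which subset of $B$ kills each obstruction externally, so some class of this partition has size at least $\ksame$. I would guess the class by selecting $\cO_s\subseteq\cO'$ of size $\ksame$ (at cost $\binom{\kobs-k}{\ksame}$) together with the number $\ell\in\{1,\dots,k\}$ of common external killers, and then compute the set $Z$ of variables that kill every $O\in\cO_s$ externally. The presumed $B$ contains exactly $\ell$ elements of $Z$ and no other variable that kills any member of $\cO_s$.

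The construction of $S$ then follows three rules. Rule~\ref{rule:fewkillers} disposes of the case $|Z|<|\cO_s|$ by taking $S:=Z$, which is correct because $B$ must contribute at least one element of $Z$. Otherwise, for each $O\in\cO_s$ I would build the auxiliary graph $G_O$ on vertex set $Z$ of minimum degree~$1$ by greedily adding edges whose endpoints are connected through $O$ along a sign-consistent subpath (so that Fact~\ref{fact:2} applies); form the multigraph $G_m(\cO_s)=\biguplus_{O\in\cO_s}E_O$; apply Rule~\ref{rule:multi} with $S:=\{u,v\}$ whenever a pair $\{u,v\}$ has multiplicity exceeding $2\cdot 2^k$ (since the $2^k$ assignments to $B$ cannot cover three independent $u$--$v$ paths in any $\inc(F[\tau])$ without creating an \obs); and otherwise invoke Rule~\ref{rule:nomulti} by taking the $2k$ highest-degree vertices of the simple graph $G(\cO_s)$ obtained by merging parallel edges of $G_m(\cO_s)$.

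The main obstacle is correctness of Rule~\ref{rule:nomulti}. Assuming for contradiction a valid \SNBDS $B$ disjoint from $S$, the degree-based choice of $S$ ensures that $G(\cO_s)-B$ retains at least a third of the $\Omega(2^k\cdot |Z|)$ edges of $G(\cO_s)$. I would then process the $k$ variables of $B$ in turn, setting each so as to kill at most half of the surviving edges, which is possible because every $v\in B$ (whether in $Z$ or not) is incident to the paths corresponding to surviving edges with a single sign; this leaves $\Omega(|Z|)$ edges for some truth assignment $\tau\in 2^B$. The theorem of Kirousis, Serna and Spirakis then yields a $3$-vertex-connected induced subgraph $H'$; Menger's theorem gives three independent $x$--$y$ paths in $H'$; expanding edges back into the underlying $G_O$-paths produces three edge-disjoint $x$--$y$ walks in $\inc(F[\tau])$ that can be shortcut into edge-disjoint paths; Lemma~\ref{lem:paths} then converts them into three independent $x'$--$y'$ paths; and attaching to each path a variable from its nearby obstruction in $\cO_s$ yields an \obs in $\inc(F[\tau])$, contradicting Lemma~\ref{lem:obs}. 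Finally, $|S|\le\ksame$ for every guess sequence, and the total number of sequences is $\binom{\kobs}{k}\binom{\kobs-k}{\ksame}k = 2^{O(k^8)}$, so $|S^*|\le 2^{O(k^{10})}$ as required.
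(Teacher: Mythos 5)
Your proposal is correct and follows essentially the same route as the paper's own proof: the same extraction of $\kobs$ disjoint \obss via Lemma~\ref{lem:grid}, the same guessing of internally killed obstructions, the pigeonhole selection of $\cO_s$ and the common-killer set $Z$, the three rules with identical thresholds, and the same contradiction argument for Rule~\ref{rule:nomulti} via the Kirousis--Serna--Spirakis theorem, Menger's theorem, and Lemma~\ref{lem:paths}. The only differences are presentational (e.g., writing $\Omega(|Z|)$ where the paper tracks the constant $5|Z|$ needed for $3$-connectivity), so nothing substantive is missing.
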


%

\subsection{Small Treewidth}

The goal of this subsection is to design an \FPT\ algorithm that, given 
a tree decomposition of $G$ of width at most $\ktw$, finds a \SNBDS of $F$ of size $k$ or determines that $F$ has no such \SNBDS.

Our algorithm uses Arnborg \etal's extension \cite{ArnborgLagergrenSeese91} of Courcelle's Theorem \cite{Courcelle90}.
It gives, amongst others, an \FPT\ algorithm that takes as input a graph $\mathcal{A}$ with labeled vertices and edges and
a Monadic Second Order (MSO) sentence $\varphi(X)$, and computes a minimum-sized set of vertices $X$ such that $\varphi(X)$ is true in $\mathcal{A}$.
Here, the parameter is $|\varphi|+\tw(\mathcal{A})$.

%

We will define a labeled graph whose treewidth is upper bounded by a function of $\tw(\inc(F))$, and an
MSO-sentence of constant length such that the graph models the MSO-sentence
whenever its argument is a \SNBDS of $F$.
 
\begin{restatable}{lemma}{LemTw}\label{lem:tw}\shortversion{\textup{($\star$)}}
 There is an \FPT\ algorithm that takes as input a CNF formula~$F$, a positive integer parameter $k$, and a tree decomposition of $G$ of width at most $\ktw$,
 and finds a \SNBDS of $F$ of size $k$ if one exists.
\end{restatable}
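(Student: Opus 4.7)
My plan is to apply the monadic second-order optimization algorithm of Arnborg, Lagergren and Seese to a suitably labeled augmentation $\mathcal{A}$ of $\inc(F)$ together with a constant-length MSO sentence $\varphi(B)$ that holds exactly for the strong $\Nested$-backdoor sets of $F$. The crucial ingredient is the Kratochv\'{\i}l--K\v{r}iv\'{a}nek characterization: $F[\tau] \in \Nested$ iff $\incu(F[\tau])$ is planar, combined with the MSO-expressibility of planarity via Wagner's excluded minors $K_5$ and $K_{3,3}$.

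Concretely, I would let $\mathcal{A}$ be $\incu(F)$ with vertices labeled as ``variable'', ``clause'', or ``$c^*$'', and edges labeled with their signs. Adding the universal clause $c^*$ to $\inc(F)$ increases treewidth by at most one (put $c^*$ into every bag of a tree decomposition of $\inc(F)$), so $\tw(\mathcal{A})\le \ktw+1$, which is bounded by a function of $k$. The MSO sentence then reads: ``$B$ is a set of variable vertices, and for every $B_1\subseteq B$ the subgraph of $\mathcal{A}$ induced by $V_{B,B_1}$ is planar'', where $V_{B,B_1}$ contains $c^*$, all variables not in $B$, and every clause $c$ that has no positive edge to a vertex of $B_1$ and no negative edge to a vertex of $B\setminus B_1$ (i.e., $c$ is unsatisfied by the assignment $\tau$ encoded by $B_1$). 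This induced subgraph coincides with $\incu(F[\tau])$ up to isolated pendants attached to $c^*$ coming from variables that disappear from $F[\tau]$; such pendants do not influence planarity. Planarity is written as ``there do not exist five (resp.\ six) pairwise disjoint subsets of $V_{B,B_1}$, each inducing a connected subgraph of $\mathcal{A}[V_{B,B_1}]$, with edges of $\mathcal{A}$ witnessing the required adjacencies''; connectedness of a set $S$ is MSO-expressed by requiring that every non-empty proper subset of $S$ has an edge to its complement inside $S$. The resulting formula has length independent of $k$.

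Running Arnborg--Lagergren--Seese on $(\mathcal{A},\varphi)$ returns a minimum-cardinality $B$ satisfying $\varphi$, which by construction is a minimum \SNBDS{} of $F$; we accept if $|B|\le k$ and otherwise report that no \SNBDS{} of size $k$ exists. The total running time is $f(k)\cdot |F|^{O(1)}$. The main obstacle is translating ``nested'' into MSO: the definition via a linear variable ordering is not MSO-expressible on the incidence graph, and Nested-obstructions (Lemma \ref{lem:obs}) are only a one-sided characterization. Both hurdles are removed by invoking the Kratochv\'{\i}l--K\v{r}iv\'{a}nek planarity characterization together with Wagner's minor description, after which the encoding of $F[\tau]$ as an induced subgraph of $\mathcal{A}$ is routine.
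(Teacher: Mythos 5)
Your proposal is correct and follows essentially the same route as the paper's proof: apply the Arnborg--Lagergren--Seese MSO optimization algorithm to a labeled, bounded-treewidth augmentation of $\inc(F)$ containing the universal clause, and express ``$B$ is a \SNBDS'' by quantifying over assignments and forbidding surviving $K_5$- and $K_{3,3}$-minors via the Kratochv\'{\i}l--K\v{r}iv\'{a}nek and Wagner characterizations. The only differences are cosmetic encoding choices (the paper uses literal vertices with NEG-edges to represent assignments, you use sign-labeled edges and the subset of true variables; the paper gets $\tw\le 2\ktw+2$, you get $\ktw+1$), and your explicit remark that variables vanishing from $F[\tau]$ only create pendants at $c^*$ is a point the paper handles implicitly.
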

\longversion{\begin{proof}
First, we define the labeled graph $A_F$ for $F$.
The set of vertices of $A_F$ is $\text{LIT} \cup \text{CLA}$, with $\text{LIT} = \lit(\univ(F))$ and $\text{CLA} = \cla(\univ(F))$.
They are labeled by $\text{LIT}$ and $\text{CLA}$, respectively.
The vertices from $\var(F)$ are additionally labeled by $\text{VAR}$.
The subset of edges $\set{x \neg x: x\in \var(F)}$ is labeled $\text{NEG}$ and the subset of edges
$\{x c: x\in \text{LIT}, \allowbreak c\in \text{CLA},$ $x\in \lit(c)\}$ is labeled $\text{IN}$.

Since a tree decomposition for $A_F$ may be obtained from a tree decomposition for $\inc(F)$
by replacing each variable by both its literals and adding the universal clause $c^*$ to each bag of
the tree decomposition, we have that $\tw(A_F)\le 2 \cdot \tw(\inc(F))+2$.

The goal is to find a subset $X$ of variables such that for each truth assignment $\tau$ to $X$ the incidence graph of $\univ(F[\tau])$ is planar.
By Wagner's theorem \cite{Wagner37}, a graph is planar \myiff it has no $K_5$ and no $K_{3,3}$ as a minor.

We break up our MSO sentence into several simpler sentences and we use the notation of \cite{FlumGrohe06}.

The following sentence checks whether $X$ is a subset of variables.
\begin{align*}
 \text{var}(X) = \forall x (Xx \rightarrow \text{VAR} x)
\end{align*}
An assignment to $X$ is a subset $Y$ of $\text{LIT}$ containing no complementary literals such that every selected literal
is a variable from $X$ or its negation, and for every variable $x$ from $X$, $x$ or $\neg x$ is in~$Y$.
The following sentence checks whether $Y$ is an assignment to $X$.
\begin{align*}
 \text{ass}(X,Y) &= \forall y (Yy \rightarrow ((Xy \vee (\exists z (Xz \wedge \text{NEG} yz)))\\
   & \quad \quad \quad \quad \quad \quad \: \wedge (\forall z (Yz \rightarrow \neg \text{NEG} yz))))\\
   & \quad \; \wedge \forall x (Xx \rightarrow (Yx \vee \exists y (Yy \wedge \text{NEG} xy)))
\end{align*}
To test whether a graph has a $K_5$-minor, we will check whether it contains five disjoint sets of vertices, such that each such set induces a connected subgraph
and all 5 sets are pairwise connected by an edge. Deleting all vertices that are in none of the five sets, and contracting each of the five sets into one vertex,
one obtains a $K_5$. The following sentence checks whether $A$ is disjoint from~$B$.
\begin{align*}
 \text{disjoint}(A,B) = \neg \exists x (Ax \wedge Bx)
\end{align*}
To check whether $A$ is connected with respect to the edges labeled $\text{IN}$, we check that there is no set $B$ that is a proper nonempty subset of $A$ such that $B$
is closed under taking neighbors in $A$.
\begin{align*}
 \text{connected}(A) = \neg \exists B (&\exists x (Ax \wedge \neg Bx) \wedge \exists x (Bx) \wedge \forall x (Bx \rightarrow Ax)\\
                       & \forall x,y ((Bx \wedge Ay \wedge \text{IN}xy) \rightarrow By))
\end{align*}
The following sentence checks whether some vertex from $A$ and some vertex from $B$ have a common edge labeled $\text{IN}$.
\begin{align*}
 \text{edge}(A,B) = \exists x,y (Ax \wedge Bx \wedge \text{IN}xy)
\end{align*}
An assignment removes from the incidence graph all variables that are assigned and all clauses that are assigned correctly.
Therefore, the minors we seek must not contain any variable that is assigned nor any clause that is assigned correctly.
The following sentence checks whether all vertices from a set $A$ survive when assigning $Y$ to $X$.
\begin{align*}
 \text{survives}(A,X,Y) = \neg \exists x (Ax \wedge (Xx \vee \exists y (Yy \wedge \text{IN}yx)))
\end{align*}
Testing whether a $K_5$-minor survives in the incidence graph is now done as follows.
\begin{align*}
 \text{$K_5$-minor}(X,Y) &= \exists A_1, \dots, A_5 ( \bigwedge_{i=1}^5 \text{survives}(A_i) \wedge
                                                     \bigwedge_{1\le i\neq j \le 5} \text{disjoint}(A_i,A_j) \wedge\\
                                                     \quad & \bigwedge_{i=1}^5 \text{connected}(A_i) \wedge
                                                     \bigwedge_{1\le i\neq j \le 5} \text{edge}(A_i,A_j))
\end{align*}
Similarly, testing whether a $K_{3,3}$-minor survives in the incidence graph is done as follows.
\begin{align*}
 K_{3,3}&\text{-minor}(X,Y) = \exists A_1, A_2, A_3, B_1, B_2, B_3 ( \bigwedge_{i=1}^3 (\text{survives}(A_i) \wedge \text{survives}(B_i)) \wedge\\
                                                     & \bigwedge_{1\le i,j \le 3} \text{disjoint}(A_i,B_j) \wedge
                                                      \bigwedge_{1\le i\neq j \le 3} (\text{disjoint}(A_i,A_j) \wedge \text{disjoint}(B_i,B_j)) \wedge\\
                                                     & \bigwedge_{i=1}^3 (\text{connected}(A_i) \wedge \text{connected}(B_i)) \wedge
                                                     \bigwedge_{1\le i,j \le 3} \text{edge}(A_i,B_j))
\end{align*}
Our final sentence checks whether $X$ is a \SNBDS of $F$.
\begin{align*}
 \text{SNB}(X) = \text{var}(X) \wedge \forall Y (\text{ass}(X,Y) \rightarrow \neg (\text{$K_5$-minor}(X,Y) \vee \text{$K_{3,3}$-minor}(X,Y))))
\end{align*}
Since $|\text{SNB}| = O(1)$ and $\tw(A_F) \le 2 \cdot \ktw+2$, by \cite{ArnborgLagergrenSeese91} there is an \FPT\ algorithm finding
a \SNBDS of minimum size, where the parameter is $k$.
\myqed \end{proof}

}

We note that in the case where $\tw(G)$ is bounded, one could immediately solve the satisfiability problem for $F$ \cite{FischerMakowskyRavve06,SamerSzeider10}.
However, finding a \SNBDS enables us to give an \FPT\ approximation algorithm for the backdoor detection problem.

\subsection{The FPT algorithm}
\label{subsec:algo}

Our \FPT-approximation algorithm combines the results from the previous two subsections.
In case $G$ has treewidth at most $\ktw$, Lemma \ref{lem:tw} is used to find a solution of size $k$ if one exists.
Otherwise, Lemma \ref{lem:wall} provides a set $S^*$ of $2^{O(k^{10})}$ variables such that any solution of size at most~$k$ contains
a variable from $S^*$. For each $x\in S^*$, the algorithm recurses on both formulas $F[x=0]$ and $F[x=1]$.
If both recursive calls return \SNBDSs $B_{\neg x}$ and $B_{x}$, then $\{x\} \cup B_{x} \cup B_{\neg x}$
is a \SNBDS of $F$, otherwise, no \SNBDS of $F$ of size at most~$k$ contains $x$.

Since $B_{\neg x}$ could be
disjoint from $B_{x}$ in the worst case, while $F[x=0]$ and $F[x=1]$ could have \SNBDSs $B_{\neg x}'$ and
$B_{x}'$ of size $k-1$ with $B_{\neg x}'=B_{x}'$, our approach approximates the optimum with a factor of $2^k/k$.

\begin{restatable}{theorem}{ThmStrong}\label{thm:strong}\shortversion{\textup{($\star$)}}
There is an \FPT\ algorithm, which, for a CNF formula $F$ and a positive integer parameter $k$, either concludes that $F$
has no \SNBDS of size at most $k$ or finds a \SNBDS of $F$ of size at most $2^k$.
\end{restatable}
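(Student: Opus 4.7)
The plan is to build a recursive FPT-approximation algorithm that dovetails the two preceding lemmas. Given $(F,k)$, I first run Bodlaender's algorithm to either produce a tree decomposition of $\inc(F)$ of width at most $\ktw$ or to certify $\tw(\inc(F)) > \ktw$. In the former case I invoke Lemma~\ref{lem:tw} directly, which either returns a \SNBDS of size exactly $k$ (well within the $2^k$ budget) or certifies that none exists. In the latter case, by the grid-minor theorem and the bound on $f(r)$, $\inc(F)$ has a $\kgrid$-grid as a minor, which the Kawarabayashi et al.\ algorithm finds in FPT time. Lemma~\ref{lem:wall} then produces a set $S^* \subseteq \var(F)$ of size $2^{O(k^{10})}$ such that every \SNBDS of $F$ of size at most $k$ contains at least one variable from $S^*$.

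Next comes the branching phase. For each $x \in S^*$, I recursively solve the two instances $(F[x=0], k-1)$ and $(F[x=1], k-1)$. If both recursive calls return \SNBDSs $B_0$ and $B_1$ for the respective reduced formulas, then $\{x\} \cup B_0 \cup B_1$ is easily seen to be a \SNBDS of $F$, since any truth assignment $\tau$ to it extends an assignment to $\{x\}$ and then to $B_{\tau(x)}$. If for every $x \in S^*$ at least one of the two recursive calls fails, then by Lemma~\ref{lem:wall} no \SNBDS of size at most $k$ exists and the algorithm outputs \textsc{No}. The base case $k=0$ simply checks whether $F \in \Nested$ using Knuth's polynomial-time algorithm, returning $\emptyset$ or \textsc{No}.

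The correctness of the approximation factor is the only piece needing careful bookkeeping. Letting $T(k)$ denote the worst-case size of the returned backdoor set, I have $T(0) = 0$ and $T(k) \le 1 + 2 T(k-1)$, which solves to $T(k) \le 2^k - 1 \le 2^k$, meeting the claimed bound. For the running time, the recursion tree has depth $k$, and at each internal node the branching factor is at most $2|S^*| = 2^{O(k^{10})}$, so the total number of recursive calls is $2^{O(k^{11})}$; at each node, Bodlaender's algorithm, the grid-minor extraction, and the invocations of Lemmas~\ref{lem:wall} and \ref{lem:tw} all run in $f(k) \cdot |F|^{O(1)}$ time, yielding an overall FPT bound.

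The main delicate point is not the algorithm itself, which is essentially dictated by the two lemmas, but ensuring that the $2^k$ approximation bound follows cleanly from the naive recursion; a slightly less careful formulation ($T(k) \le 1 + 2 T(k-1)$ with $T(0)=1$) would only give $T(k) \le 2^{k+1} - 1$. The base case of returning $\emptyset$ when $F$ is already nested is what keeps the recurrence tight.
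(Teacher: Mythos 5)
Your proposal is correct and follows essentially the same route as the paper's proof: Bodlaender's algorithm to split into the bounded-treewidth case (handled exactly via Lemma~\ref{lem:tw}) and the large-grid case (branch over the set $S^*$ from Lemma~\ref{lem:wall}, recursing on $F[x=0]$ and $F[x=1]$ with parameter $k-1$), with the same recurrence giving the $2^k-1$ size bound. The only cosmetic difference is the base case ($k=0$ with a nestedness test, versus the paper's exact polynomial-time treatment of $k\le 1$), which does not affect correctness.
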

\longversion{\begin{proof}
 If $k\le 1$, our algorithm solves the problem exactly in polynomial time.
 Otherwise, it runs 
 Bodlaender's \FPT\ algorithm \cite{Bodlaender96} with input $G$ and parameter $\ktw$ to either find a tree decomposition of $G$ of width at most
 $\ktw$ or to determine that $\tw(G)>\ktw$. 
 In case a tree decomposition of width at most $\ktw$ is found, the algorithm uses Lemma \ref{lem:tw} to compute a \SNBDS of $F$ of size $k$ if one exists,
 and it returns the answer.

 In case Bodlaender's algorithm determines that $\tw(G)>\ktw$, by \cite{RobertsonSeymourThomas94} we know that $G$ has a $\kgrid$-grid
 as a minor. Such a $\kgrid$-grid is found by running the FPT algorithm of Kawarabayashi \etal \cite{KawarabayashiKR12} with input $G$ and
 parameter $\kgrid$. The algorithm now executes the procedure from Lemma \ref{lem:wall} to find a set $S^*$ of $2^{O(k^{10})}$ variables
 from $\var(F)$ such that every \SNBDS of size at most $k$ contains a variable from $S^*$.
 The algorithm considers all possibilities that the \BDS contains some $x\in S^*$; there are $2^{O(k^{10})}$ choices for $x$.
 For each such choice, recurse on $F[x = \true]$ and $F[x = \false]$ with parameter $k-1$.
 If, for some $x\in S^*$, both recursive calls return \BDSs $B_x$ and $B_{\neg x}$, then return $B_x\cup B_{\neg x}\cup \set{x}$,
 otherwise, return \textsc{No}. As $2^k-1 = 2\cdot (2^{k-1}-1)+1$, the solution size is upper bounded by $2^k-1$.
 On the other hand, if at least one recursive call returns \textsc{No} for every $x\in S^*$, then $F$ has no \SNBDS of size at most $k$.
\myqed \end{proof}

}

\noindent
In particular, this proves Theorem \ref{thm:sat}.

\section{Conclusion}

We have classified the problems SAT and \#SAT as fixed-parameter
tractable when parameterized by the size of a smallest strong backdoor
set with respect to the base class of nested formulas. As argued in the
introduction, this parameter is incomparable with incidence treewidth.
  
The parameter dependence makes our algorithm impractical.  However, we
would like to note that the class of fixed-parameter tractable problems
has proven to be quite robust: Once a problem is shown to belong to this
class, one can start to develop faster and more practical algorithms.
For many cases in the past this was successful. For instance, the
problem of recognizing graphs of genus~$k$ was originally shown to be
fixed-parameter tractable by means of non-constructive tools from graph
minor theory~\cite{FellowsLangston88}. Later a linear-time algorithm
with doubly exponential parameter dependence was found~\cite{Mohar96},
and more recently, the algorithm could be improved to a single
exponential parameter dependence~\cite{KawarabayashiMoharReed08}.

It would be interesting to see whether a similar improvement is possible for
finding or FPT-approximating strong backdoor sets with respect to nested
formulas.

{
\bibliographystyle{plain}
\bibliography{literature}
}

\shortversion{
\newpage
\appendix

\section{Appendix}

\LemGrid*

\LemPaths*

\LemTw*

\ThmStrong*

}

\end{document}